\documentclass[10pt,letterpaper]{article}

\usepackage{amssymb}                    
\usepackage{amsmath}                    
\usepackage{mathrsfs}                   
\usepackage{IEEEtrantools}
\usepackage{times}                      
\usepackage[T1]{fontenc}

\usepackage[pdftex]{color,graphicx}
\usepackage{ctable}                     
\usepackage[binary-units=true]{siunitx} 
\usepackage[english=american]{csquotes}	

\usepackage[font+=small]{caption}
\usepackage[font+=footnotesize]{subcaption}


\usepackage{cite}
\usepackage{url}

\newcommand{\secref}[1]{Section~\ref{#1}}
\newcommand{\figref}[1]{Fig.~\ref{#1}}
\newcommand{\figsref}[1]{Figs.~\ref{#1}}
\newcommand{\tabref}[1]{Tab.~\ref{#1}}	
\newcommand{\asref}[1]{Assumption~\ref{#1}}

\newtheorem{proof}{Proof}
\newtheorem{proposition}{Proposition}
\newtheorem{assumption}{Assumption}

\newcommand{\airush}{\textsc{Airush One}$^\text{\textregistered}$~}

\renewcommand{\vec}[1]{\mathbf{#1}}
\newcommand{\kW}[1]{\SI{#1}{\kilo\watt}}
\newcommand{\degr}[1]{\SI{#1}{\degree}}
\newcommand{\radi}[1]{\SI{#1}{\radian}}
\newcommand{\seconds}[1]{\SI{#1}{\second}}
\newcommand{\minu}[1]{\SI{#1}{\minute}}
\newcommand{\Meter}[1]{\SI{#1}{\meter}}
\newcommand{\sqm}[1]{\SI{#1}{\meter\squared}}
\newcommand{\mps}[1]{\SI[per-mode=symbol]{#1}{\meter\per\second}}
\renewcommand{\theta}{\vartheta}
\renewcommand{\phi}{\varphi}


\oddsidemargin -0.4cm
\topmargin -1.5cm
\textheight 23cm
\textwidth 17cm


\begin{document}
\date{August 28, 2014}

\title{Automatic Retraction and Full Cycle Operation for a Class of Airborne Wind Energy Generators
\thanks{This manuscript is a preprint of a paper submitted for possible publication on the IEEE Transactions on Control Systems Technology and is subject to IEEE Copyright. If accepted, the copy of record will be available at \textbf{IEEEXplore} library: http://ieeexplore.ieee.org/.}
\thanks{This research has received partial funding from the Swiss Competence Center Energy and Mobility (CCEM).}}
\author{A.~U. Zgraggen\thanks{Corresponding author: zgraggen@control.ee.ethz.ch.}, L. Fagiano
 \thanks{L. Fagiano is with ABB Switzerland Ltd., Corporate Research, Baden-D\"{a}ttwil, Switzerland. \protect E-mail: lorenzo.fagiano@ch.abb.com}
 and M. Morari
\thanks{A. Zgraggen and M. Morari are with the Automatic Control Laboratory, ETH Z\"{u}rich, Switzerland. \protect E-mail: zgraggen\textbar morari@control.ee.ethz.ch}}
\maketitle

\begin{abstract}
Airborne wind energy systems aim to harvest the power of winds blowing at altitudes higher than what conventional wind turbines reach.
They employ a tethered flying structure, usually a wing, and exploit the aerodynamic lift to produce electrical power.
In the case of ground-based systems, where the traction force on the tether is used to drive a generator on the ground, a two phase power cycle 
is carried out: one phase to produce power, where the tether is reeled out under high traction force, and a second phase where the tether is recoiled under minimal load. The problem of controlling a tethered wing in this second phase, the retraction phase, is addressed here, by proposing two possible control strategies.
Theoretical analyses, numerical simulations, and experimental results are presented to show the performance of the two approaches.
Finally, the experimental results of complete autonomous power generation cycles are reported and compared with first-principle models.
\end{abstract}

\section{Introduction}
Airborne wind energy (AWE) systems are an emerging technology to harvest renewable energy from wind. Their aim is to harness the energy contained in the strong and steady winds beyond the altitude reached by traditional wind turbines, see \cite{AWE13,FaMi12} for an overview. These systems consist of a ground unit (GU), a wing, and one or more tethers connecting them.

During power production, the wing is flown in a \enquote{crosswind pattern}, i.e. roughly perpendicular to the wind flow, exceeding the speed of the wind and thus exerting high aerodynamic forces.
The generators can either be placed on-board of the wing or on the ground inside the GU. On-board generation systems use propellers driven by the high apparent wind speed and then transfer the produced power to the ground via an electrified tether, see e.g. \cite{Makani}. 
On the other hand, ground-based generation systems use the traction force on the cable to spin a drum installed on the GU and connected to a generator, see e.g. \cite{Ampyx}. In this paper we consider the latter approach. 

The wing's path can be influenced by means of different technical solutions, which typically give rise to a steering input corresponding to a change of the roll angle of the wing. 
Assuming a straight tether, the path of the wing is restricted to a spherical surface with a radius equal to the tether length, confined by the ground and a vertical plane perpendicular to the wind direction. This spherical surface is commonly called \enquote{wind window}.
Depending on the path flown by the wing, a higher or lower traction force is experienced. During crosswind paths a high wing speed can be achieved and thus a high traction force is exerted. On the other hand, if the wing is flown on the side of the wind window, i.e. with the tether roughly perpendicular to the wind direction, a low wing speed results and a small traction force is exerted.

These two different flying conditions are exploited in ground-based generation AWE systems by flying a two phase power cycle \cite{Loyd80,CaFM09c}. In the first phase, called traction phase, power is produced by flying a crosswind pattern and using the high traction force to unreel the tether from the drum. Once the maximum tether length is reached, the second phase, called retraction phase, is carried out by moving the wing on the side of the wind window and then recoiling the lines under low traction forces.
In this way only a fraction of the energy previously produced is consumed.
This approach is considered by various companies and research groups \cite{Ampyx,SwissKitePower,TwingTec,Enerkite,Kitenergy,SkySailsPower,WindLift,eKite,Argatov2010_Struct,JehleSchmehl14,CaFM07,HouDi10,TBSO11}.

The automatic control of tethered wings plays a major role for the operation of this kind of system and has been studied in various publications, see \cite{CaFM09c,IlHD07,BaOc12,WiLO08,HouDi07,CoBo13,PhDMoritzDiehl,JehleSchmehl14,FaZg13,ErSt12}. 
Several of these approaches consider only the problem of flying crosswind trajectories when energy is produced. However, for ground-based generation systems also the retraction of the tether has to be done autonomously.
In \cite{CaFM09c} and \cite{IlHD07} two controllers for the retraction phase, using nonlinear Model Predictive Control strategies, have been proposed.
However, these strategies might be difficult to implement and tune due to their complexity. Additionally, they assume quite a good knowledge of the wind speed at the wing's location, which is hard to obtain in practice, and they have been tested in simulations only, assuming that the model used for the control calculation corresponds exactly to the real system.

In this paper, we tackle the problem of autonomous retraction phase for ground-based AWE systems by presenting two possible control approaches, which we tested in real-world experiments with a small-scale prototype. The first one is an extension of the approach presented in \cite{FaZg13} and it is based on the notion of the velocity angle of the wing, which represents its flying direction.
As we will show in this paper, this notion can be adapted such that it can also be used for feedback control during the retraction phase, when the speed of the wing relative to the GU is low and the original definition of the velocity angle is not valid anymore. The resulting controller is dependent on an estimate of the wind direction at the wing's location. We will show that with this approach the wing can be stabilized at the border of the wind window during the retraction phase. 

Since an estimate of the wind direction at the wing's location is not straightforward to obtain, we propose an alternative approach by controlling directly the elevation of the wing. In order to do so, we derive a new model relating the steering input to the vertical acceleration of the wing, and we use such a model for control design.
Also this control system is able to stabilize the wing's trajectory at a constant elevation angle and it only relies on directly measurable variables, hence resulting in a more reliable and robust solution with respect to the previous one. 
The considerations above are supported by simulation results used to compare the two approaches. Real-world experiments are then used to validate both control strategies.
There exists evidence in the literature \cite{AWE13} that other groups and companies have been flying autonomous power cycles, however there are no publications explaining the employed control strategy.
By achieving an autonomous retraction phase, we have been able to test fully autonomous power cycles in experiments, whose results we compare here to the well-known equations \cite{Loyd80} that lie at the very foundations of the concept of airborne wind energy, showing indeed a good matching between mathematical models and real-world data.

The paper is structured as follows. In \secref{sec:SystemDescription}, we describe the considered system and the models we use.
In \secref{sec:Automatic_Retraction} we introduce the two different control approaches for the retraction phase and discuss the tether reeling scheme.
In \secref{sec:results}, simulation and experimental results are presented and discussed for both control approaches.
Conclusions and future developments are given in \secref{sec:conclusion}.

\section{System Description}
\label{sec:SystemDescription}

The system under consideration is related to the Swiss Kite Power prototype \cite{SwissKitePower}, see \figref{fig:GS1_front}. It is an AWE system featuring ground-based steering actuators with the generators placed inside the GU. It has three drums with a motor connected to each one, and it can be used with one, two, and three-line wings or power kites. 
In three-line systems, the line wound around the middle drum, called main line, is connected to the leading edge of the wing and sustains the main portion of the traction force. The lines on the other two drums are called steering lines and are connected to the left and right wing tips. These two lines are used to influence the wing's trajectory. By changing the difference, $\delta$, between the length of the two steering lines the required steering deviation can be issued. A shorter left line induces a counter-clockwise turn of the wing as seen from the GU, and vice-versa. The system has a total rated power of \kW{20}; the generator of the middle drum has a power rating of \kW{10} and each of the motors connected to the drums of the steering lines has a power rating of \kW{5}. The system is operated with tether lengths up to \Meter{200}. 
We first recall a dynamical model of the described system, followed by the definition of the velocity angle $\gamma$, which acts as one of the main feedback variables during the traction phase (for the details on the controller employed in this phase, we refer the reader to \cite{FaZg13}).

\begin{figure}[htb]
 \begin{center}
  \includegraphics[trim= 0cm 0cm 0cm 0cm,width=0.5\textwidth]{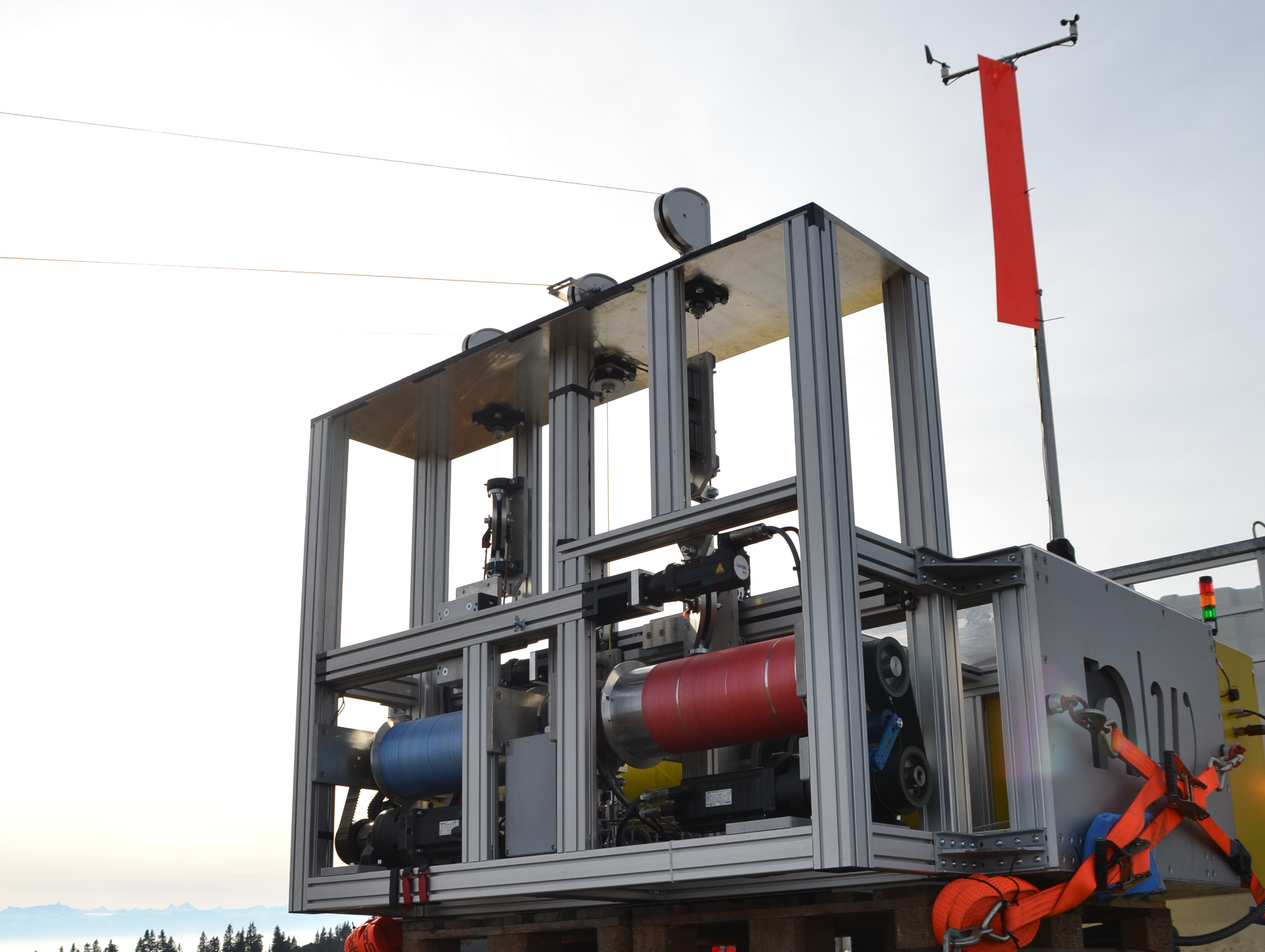}
  \caption[]{Front view of the Swiss Kite Power prototype built at Fachhochschule Nordwestschweiz. The two steering lines, left (red) and right (blue), are wound around drums connected via a belt drive to motors mounted below the drums. The center line (yellow) wound around the main drum is behind the two other drums and is partly visible below the left steering line's drum. All three lines are guided separately via pulleys to the lead-out sheaves, visible at the top. On the lead-out sheave of the main line a line angle sensor is mounted. A wind sensor, mounted roughly \Meter{5} above the ground, is visible in the background.}
  \label{fig:GS1_front}
 \end{center}
\end{figure}

\subsection{Model Equations}
The dynamical model we consider has been widely used in previous works, see e.g. \cite{CaFM09c} and references therein. We will recall the model equations shortly, following the same notation as in \cite{FaZg13} and additionally include a further degree of freedom to account for the reeling capabilities of the considered prototype. We will denote vector valued variables in bold, e.g. ${}_G\mathbf{p}(t)$, where the subscript letter in front of vectors denote the reference system considered to express the vector components and $t$ denotes the time dependence.

An inertial frame centered at the GU is denoted as $G\doteq(\mathbf{e}_x,\mathbf{e}_y,\mathbf{e}_z)$, where unit vectors are denoted by $\mathbf{e}$ with the corresponding direction indicated by the trailing subscript letter.
The $\mathbf{e}_x$ axis is assumed to be parallel to the ground, contained in the longitudinal symmetry plane of the GU, the $\mathbf{e}_z$ axis is perpendicular to the ground pointing upwards, and the $\mathbf{e}_y$ axis is such that it forms a right hand system. The wing's position vector ${}_G\mathbf{p}(t)$ can be expressed in the inertial frame using spherical coordinates  $(\phi(t),\theta(t),r(t)))$ as (see \figref{fig:GeneralSystem}):
\begin{IEEEeqnarray}{rCl}\label{eqn:WingPos}
 {}_G\mathbf{p}(t) &=& \begin{pmatrix}
                        r(t)\cos{(\phi(t))}\cos{(\theta(t))}\\
                        r(t)\sin{(\phi(t))}\cos{(\theta(t))}\\
                        r(t)\sin{(\theta(t))}
                       \end{pmatrix}\, .
\end{IEEEeqnarray}
Note that all three variables $(\phi(t),\theta(t),r(t)))$ can be measured directly with good accuracy by devices installed on the ground such as line angle sensors and motor encoders.

\begin{figure}[htb]
 \begin{center}
  \includegraphics[trim= 0cm 0cm 0cm 0cm,width=0.5\textwidth]{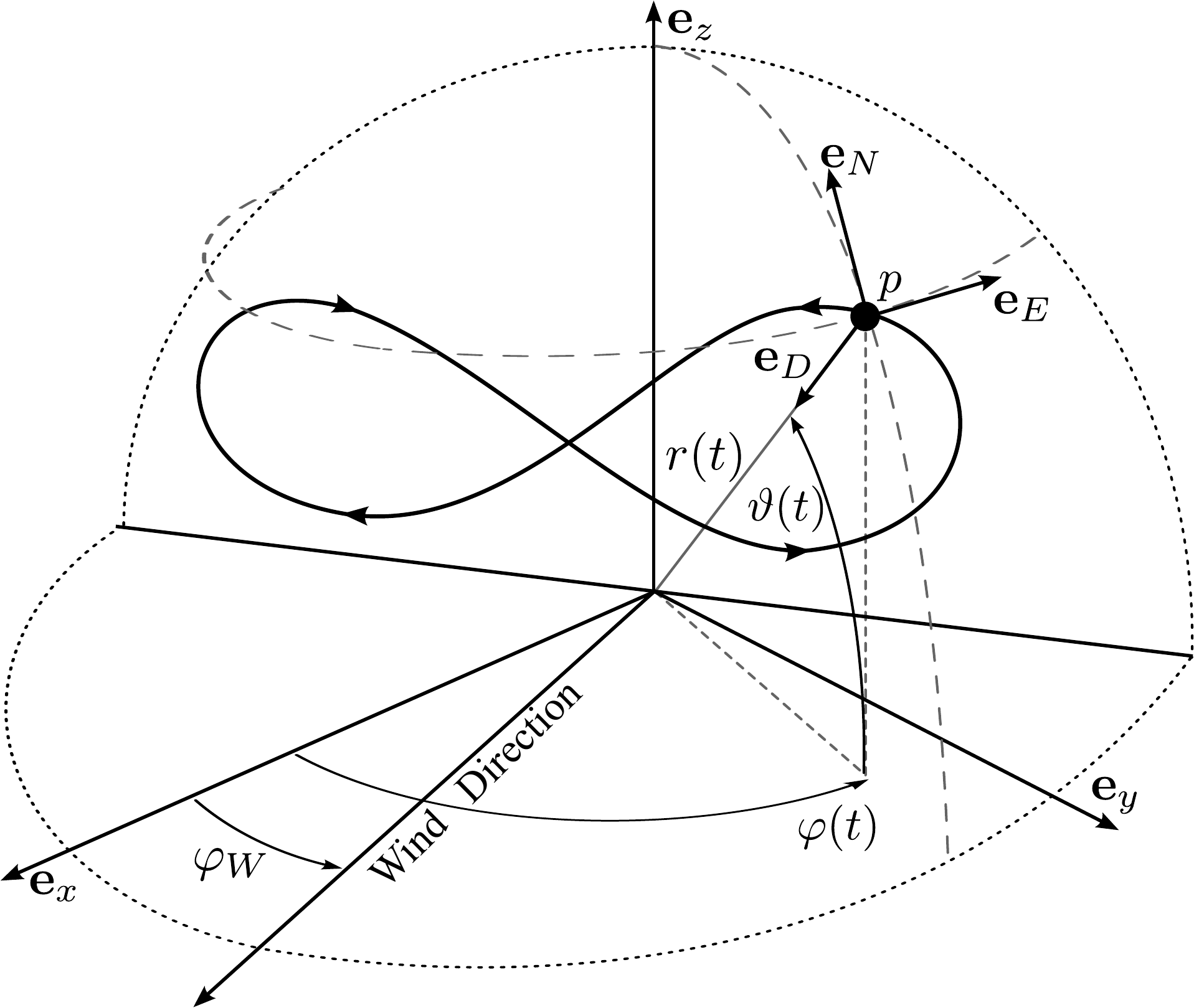}
  \caption[]{The wing's position $p$ (black dot) is shown on a figure-eight crosswind path together with the local coordinate frame $L$ and the inertial coordinate frame $G$. The wind direction is denoted by angle $\phi_W$ defining the wind window (dotted). Note the arrows on the figure-eight path showing an ``up-loop'' pattern, i.e the wing is flying upwards on the side of the path and downwards in the middle.}
  \label{fig:GeneralSystem}
 \end{center}
\end{figure}

The motion of the tethered wing is restricted to the wind window, a surface with (time-varying) radius $r(t)$ confined by the ground plane $(\mathbf{e}_x,\mathbf{e}_y)$ and by a vertical plane perpendicular to the wind direction $\phi_W$ and containing the origin of $G$.
If $r(t)$ is kept constant, the wind window corresponds to a quarter sphere. Otherwise, depending on the reeling speed $\dot{r}(t)$ of the tether, the wind window contains a larger or smaller surface area than a quarter sphere. For example with $\dot{r}(t)<0$, i.e. reeling-in the tether, the wing is able to surpass the GU against the wind direction, thanks to the additional apparent wind speed induced by the reeling.

Additionally, we define a non-inertial coordinate system $L\doteq(\mathbf{e}_N,\mathbf{e}_E,\mathbf{e}_D)$, centered at the wing's position (depicted in \figref{fig:GeneralSystem}). The $\mathbf{e}_N$ axis, or local north, is tangent to the sphere of radius $r(t)$, on which the wing's trajectory evolves, and points towards its zenith. The $\mathbf{e}_D$ axis, called local down, points to the center of the sphere (i.e the GU), hence it is perpendicular to the tangent plane of the sphere at the wing's position. The $\mathbf{e}_E$ axis, named local east, forms a right hand system and spans the tangent plane together with $\mathbf{e}_N$. The system $L$ is a function of the wing's position only. The transformation matrix to express the vectors in the local frame $L$ from the inertial frame $G$ is denoted by $A_{LG}$ (e.g. ${}_L\mathbf{p}(t) = A_{LG}\:{}_G\mathbf{p}(t)$):
  
\small
\begin{IEEEeqnarray}{rCl}\label{eqn:A_LG}
  A_{LG} &=& \begin{pmatrix}
              -\cos{(\phi)}\sin{(\theta)} & -\sin{(\phi)}\sin{(\theta)} & \cos{(\theta)}\\
              -\sin{(\phi)}               &  \cos{(\phi)}               & 0\\
              -\cos{(\phi)}\cos{(\theta)} & -\sin{(\phi)}\cos{(\theta)} & -\sin{(\theta)}
             \end{pmatrix}\, .
\end{IEEEeqnarray}
\normalsize

From the differentiation of \eqref{eqn:WingPos} and using the rotation matrix \eqref{eqn:A_LG} we obtain the velocity vector of the wing in local coordinates $L$ with respect to the GU:
\begin{IEEEeqnarray}{rCl}\label{eqn:L_Vp}
 {}_L\mathbf{v}_P(t) &=& \begin{pmatrix}
                          r(t)\dot{\theta}(t)\\
                          r\cos{(\theta(t))}\, \dot{\phi}(t)\\
                          -\dot{r}(t)
                         \end{pmatrix}\, .
\end{IEEEeqnarray}

A dynamic model of the described system can be derived from first principles, where the wing is assumed to be a point with given mass. The tether is assumed to be straight with a non-zero diameter. The aerodynamic drag of the tether and the tether mass are added to the wing's drag and mass, respectively. 
The effects of gravity and inertial forces are also considered. The wing is assumed to be steered by a change of the roll angle $\psi(t)$, which is manipulated by the control system via the line length difference $\delta(t)$.
By applying Newton's law of motion to the wing in the reference system $L$ we obtain:
\begin{IEEEeqnarray}{rCl}\label{eqn:EqnMot}
 \renewcommand\arraystretch{1.7}
 \begin{vmatrix}
   \ddot{\theta} &=& \frac{\mathbf{F}\cdot \mathbf{e}_{N}}{rm} - \sin{(\theta)} \cos{(\theta)} \dot{\phi}^2-\frac{2}{r}\dot{\theta}\dot{r}\\
   \ddot{\phi}   &=& \frac{\mathbf{F}\cdot \mathbf{e}_{E}}{rm\cos{(\theta)}} + 2\tan{(\theta)}\dot{\theta}\dot{\phi}-\frac{2}{r}\dot{\phi}\dot{r}\\
   \ddot{r}      &=& -\frac{\mathbf{F}\cdot \mathbf{e}_{D}}{m} + r\dot{\theta}^2 + r\cos^2{(\theta)} \dot{\phi}^2
 \end{vmatrix}\, ,
\end{IEEEeqnarray}
where $m$ is the mass of the wing. The force $\mathbf{F}(t)$ consist of contributions from gravity $\vec{F}_g(t)$, aerodynamic force $\vec{F}_a(t)$, and the force exerted by the lines $\vec{F}_c(t)$. Note that for simplicity of notation we dropped the time dependence of the involved variables in \eqref{eqn:A_LG} and \eqref{eqn:EqnMot}.
The force $\vec{F}_c(t)$, called traction force, opposes all other forces along the tether direction and can be influenced by the motors in the GU to control the tether reeling.
For details on the derivation of the involved forces, see \cite{CaFM09c}.
Equations \eqref{eqn:EqnMot} gives an analytic expression for the point-mass model of the wing with six states, $(\phi(t),\theta(t),r(t),\dot{\phi}(t),\dot{\theta}(t),\dot{r}(t))$, two manipulated inputs $(\delta(t),|\vec{F}_c(t)|)$, and three exogenous inputs stemming from the vector components of the incoming wind $\vec{W}(t)$. Such a model has been widely used in literature for the control design of airborne wind energy systems, see e.g. \cite{CaFM09c,CaFM07,IlHD07,BaOc12}.

In a recent contribution \cite{FaZg13} concerned with the autonomous flight along figure-eight paths during the traction phase, the notion of the velocity angle $\gamma$ has been introduced:
\begin{IEEEeqnarray}{rCl}
 \gamma(t) &\doteq& \arctan{\left(\frac{\mathbf{v}_P(t)\cdot\mathbf{e}_{E}(t)}{\mathbf{v}_P(t)\cdot\mathbf{e}_{N}(t)}\right)}\label{eqn:Gamma_definition} \\
           &=& \arctan{\left(\frac{\cos{(\theta(t))}\, \dot{\phi}(t)}{\dot{\theta}(t)}\right)}\label{eqn:Gamma}\, .
\end{IEEEeqnarray}
Thus, $\gamma(t)$ is the angle between the local north $\mathbf{e}_{N}(t)$ and the projection of the wing's velocity vector $\mathbf{v}_P(t)$ onto the tangent plane spanned by the local north and east vectors.
In \eqref{eqn:Gamma} the four-quadrant version of the arc tangent function shall be used, such that $\gamma(t)\in[-\pi,\pi]$.

The velocity angle describes the flight conditions of the wing with just one scalar: as an example, if $\gamma=0$ the wing is moving upwards towards the zenith of the wind window, and if $\gamma=\pi/2$ the wing is moving parallel to the ground towards the local east.
Additionally, a control-oriented model for tethered wings, originally proposed in \cite{ErSt12} and refined in \cite{FaZg13}, has been used for the control design of the traction phase:
\begin{IEEEeqnarray}{rCl}\label{eqn:GammaDotLaw}
 \dot{\gamma}(t) &\simeq& K(t)\delta(t)+T(t)\, ,
\end{IEEEeqnarray}
where
\begin{IEEEeqnarray}{rCl}
 K(t) &=& \frac{\rho C_L(t)A|\vec{W}_a(t)|}{2md_s}\left(1+\frac{1}{E_{eq}^2(t)}\right)^2\IEEEyesnumber\label{eqn:GammaDotLawCoeffs}\IEEEyessubnumber \label{eqn:GammaDotLawCoeffs1}\\
 T(t) &=& \frac{g\cos{(\theta(t))}\, \sin{(\gamma(t))}}{|\vec{W}_a(t)|}+\sin{(\theta(t))}\, \dot{\phi}(t) \IEEEyessubnumber \label{eqn:GammaDotLawCoeffs2}
\end{IEEEeqnarray}
In \eqref{eqn:GammaDotLaw} and \eqref{eqn:GammaDotLawCoeffs} the steering input, i.e. the line length difference of the steering lines, is denoted by $\delta(t)$, $\rho$ is the air density, $C_L(t)$ is the aerodynamic lift coefficient, $A$ is the reference area of the wing, $d_s$ is the span of the wing, $E_{eq}(t)$ is the equivalent efficiency of the wing, defined as $E_{eq}(t)\doteq C_L(t)/C_{D,eq}(t)$, where $C_{D,eq}(t)$ represents the drag coefficient of the wing and lines together, and $g$ is the gravitational acceleration.
The apparent wind $\vec{W}_a(t)$ is defined as
\begin{IEEEeqnarray}{rCl}\label{eqn:WapparentDefinition}
 \vec{W}_a(t) &=& \vec{W}(t) - \vec{v}_p(t)\, , 
\end{IEEEeqnarray}
where the incoming wind $\vec{W}(t)$ in the $L$ frame can be written as
\begin{IEEEeqnarray}{rCl}\label{eqn:L_W}
  {}_L\vec{W}(t) &=& \begin{pmatrix}
                       -W_0\cos{(\phi(t)-\phi_W)}\sin{(\theta(t))}\\
                       -W_0\sin{(\phi(t)-\phi_W)}\\
                       -W_0\cos{(\phi(t)-\phi_W)}\cos{(\theta(t))}
                      \end{pmatrix}
\end{IEEEeqnarray}
with $W_0$ being the nominal wind speed (which can eventually also be position dependent, if a wind shear model is considered).

The model \eqref{eqn:GammaDotLaw} has been validated through experimental data at constant line length with good correspondence in a wide range of operating conditions, see \cite{FaZg13}. 
It was derived assuming crosswind flight conditions as performed during the traction phase.

During retraction, the tethers have to be recoiled onto the drums under minimal force, such that only a small fraction of the previously generated energy is used.
To achieve this goal the wing is flown at the border of the wind window, in a static angular position w.r.t. the GU, i.e. with constant or slowly varying $\phi$ and $\theta$ angles. 
This represents quite a different flight condition with respect to the one assumed in \eqref{eqn:GammaDotLaw}.
However as discussed in the next section, it can been shown that the model \eqref{eqn:GammaDotLaw} can also, with some modifications, be used to describe the wing's steering dynamics during the retraction phase, employing a slightly changed definition of the velocity angle \eqref{eqn:Gamma} called ``regularized velocity angle''.

\section{Automatic Retraction of Ground-Based Airborne Wind Energy Systems}
\label{sec:Automatic_Retraction}
The control problem of automatically retracting the wing during the retraction phase involves two tasks; reeling the tether on the drum and stabilizing the position of the wing at the border of the wind window.
One of our contribution is to show that the dynamics of the wing position during retraction are almost linear thus standard linear control techniques can be applied. Additionally, the reeling control can be considered as a decoupled problem which influences the position control system as a disturbance.

We will present two different control strategies for the problem of stabilizing the wing's position during the retraction phase. The first approach uses a regularized version of the notion of the velocity angle employed by the traction phase controller and is presented in \secref{ssec:Velocity_Angle_Based_Retraction}. It needs only minor changes from the traction phase controller but relies on estimates of the wind direction and speed at the wing's location.
The second approach, explained in \secref{ssec:Elevation_Dynamics_Based_Retraction}, is based on a simplified model, introduced in this work, of the elevation dynamics of the tethered wing during the retraction. 
This second approach has the advantage of employing only directly measurable quantities, hence it does not need an estimate of the wind direction nor of the velocity angle.
In \secref{ssec:RetrCtrl_Discussion} we highlight the connections between the two approaches and in \secref{ssec:Retr_Reeling} we discuss the reeling strategy.

\subsection{Rectraction Control Based on Regularized Velocity Angle}
\label{ssec:Velocity_Angle_Based_Retraction}


The main difference between the traction and retraction phases is the magnitude of the wing's speed in the tangent plane to the wind window at the wing's location, denoted by $\mathbf{v}_P^p(t)$. 
During the retraction phase, $\mathbf{v}_P(t)$ is low and mainly consists of the reel-in speed $\dot{r}(t)$. Thus, $\mathbf{v}_P^p(t)$ is close to zero and the apparent wind speed is determined only by the wind speed $\vec{W}(t)$ and the reel-in speed $\dot{r}(t)$. 
In these conditions, the velocity angle $\gamma$ as computed in \eqref{eqn:Gamma} becomes undefined, so that this variable cannot be used for feedback control anymore.

Recall that we assume for simplicity that the wind flow is parallel to the ground, i.e. the $(\vec{e}_x,\,\vec{e}_y)$ plane, and its direction forms an angle $\phi_W$ w.r.t. $\vec{e}_x$ (see \figref{fig:GeneralSystem}). It is also assumed that the wing is designed so that it orientates itself into the apparent wind, which means that the wing's longitudinal symmetry axis is aligned with the vector $\vec{W}_a(t)$ \eqref{eqn:WapparentDefinition}, i.e. the wind direction during retraction, projected onto the tangent plane to the wind window at the wing's location. This effect can be achieved by a wing equipped with a rudder or a curved shape, like C-shaped surf kites.
Thus, during retraction the component of $\vec{W}_a(t)$ in the tangent plane to the wind window can be assumed to be equal to the wind velocity projected on the same plane.

Under this assumption, we can compute the orientation $\beta(t)$ of the wing using \eqref{eqn:L_W}, as
\begin{IEEEeqnarray}{rCl}
 \beta(t) &\doteq& \arctan{\left(\frac{-{}_L\vec{W}(t)\cdot\mathbf{e}_{E}(t)}{-{}_L\vec{W}(t)\cdot\mathbf{e}_{N}(t)}\right)}\label{eqn:beta_definition} \\
          &=&      \arctan{\left(\frac{\sin{(\phi-\phi_W)}}{\sin{\theta}\cos{(\phi-\phi_W)}}\right)}\label{eqn:beta}\, ,
\end{IEEEeqnarray}
which is the angle between the local north $\mathbf{e}_N$ and the longitudinal symmetry axis of the wing. 

From \eqref{eqn:beta}, assuming without loss of generality $\phi_W=0$, one can see that $\beta$ converges to $\pm\pi/2$ if the wing approaches the border of the wind window, e.g $\phi\approx\pm\pi/2$.
An estimate of the wind direction $\phi_W$, needed to compute the angle $\beta$, can be either obtained by measurements provided by ground based sensors or by processing the measurements of the line force collected during the traction phase, see e.g. \cite{ZFM_TCST14}.

The considerations presented so far lead to the idea of extending the definition of the velocity angle $\gamma$ by a regularization term such that the wing's orientation is also defined for static positions of the wing.  In particular, we define the regularized velocity angle as (compare with \eqref{eqn:Gamma} and \eqref{eqn:beta}):
\begin{IEEEeqnarray}{rCl}\label{eqn:Gamma_reg}
 \gamma^{\, r} &=& \arctan{\left(\frac{\cos{(\theta)}\dot{\phi} + c\sin{(\phi-\phi_W)}}
                                      {\dot{\theta}             + c\sin{\theta}\cos{(\phi-\phi_W)}}\right)}\, ,
\end{IEEEeqnarray}
where $c>0$ is a scalar chosen by the control designer. In principle, the value of $c$ should reflect the magnitude of the absolute wind speed (divided by the tether length) which might be quite difficult to obtain. However, in simulations and experiments the system behavior resulted to be not sensitive to this quantity, due to the relatively large line length values ($50$-\Meter{200}) compared to the absolute wind speed ($3$-\mps{6}).

Thus, according to \eqref{eqn:Gamma_reg}, during the traction phase when the speed of the wing is significantly larger than the wind speed we have $\gamma^{\, r} \approx \gamma$, but during the retraction phase, when the wing speed approaches zero, $\gamma^{\, r}$ still provides a reasonable value whereas $\gamma$ of \eqref{eqn:Gamma} becomes undefined. 
A comparison between $\gamma(t)$ and $\gamma^{\, r}(t)$ during a flight test is shown in  \figref{fig:GAvsGAreg_PwrCycle_SKPdata}.
\begin{figure}[htb]
 \begin{center}
  \includegraphics[trim= 0cm 0cm 0cm 0cm,width=0.5\textwidth]{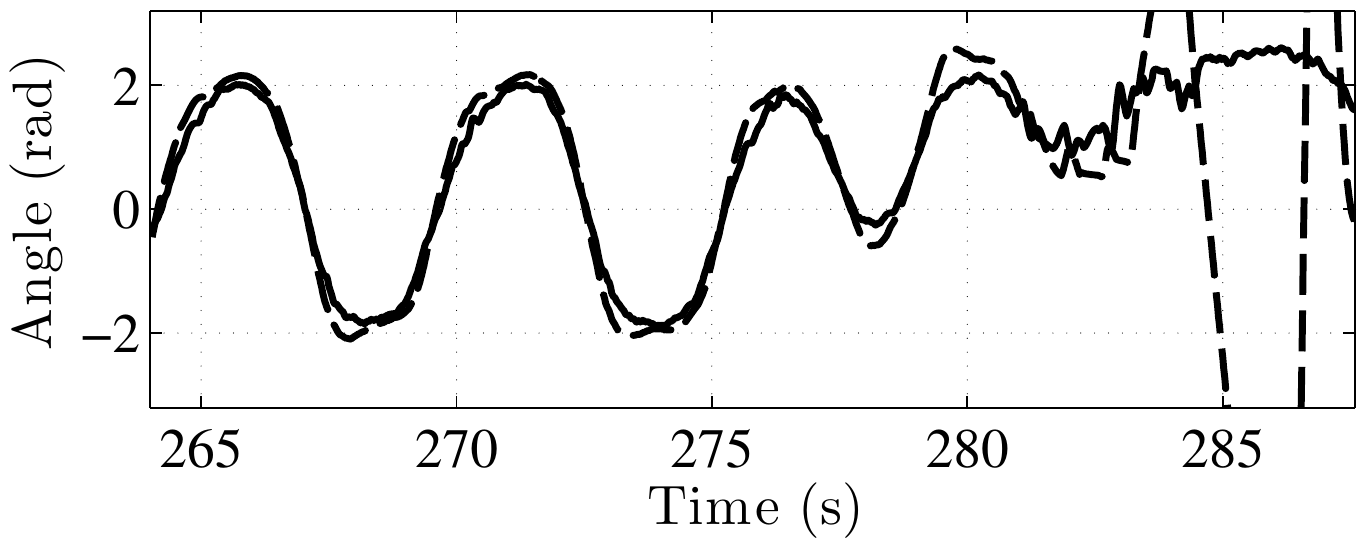}
  \caption[]{Experimental data. Time courses of $\gamma(t)$ (dashed) and $\gamma^{\, r}(t)$ (solid) during a transition from flying figure-eight paths in crosswind conditions (up to approximately $\seconds{282}$) to a position at the border of the wind window.}
  \label{fig:GAvsGAreg_PwrCycle_SKPdata}
 \end{center}
\end{figure}

With the regularized velocity angle \eqref{eqn:Gamma_reg} we can now adopt a control scheme for the retraction phase similar to the one used for the traction phase controller described in \cite{FaZg13}.


In particular, we consider a hierarchical control scheme consisting of three nested loops, shown in \figref{fig:Control_Scheme_GAreg}. Note that the regularized velocity angle cannot be directly measured and need to be estimated, see \cite{FHBK12} for details. 
\begin{figure}[htb]
 \begin{center}
  \includegraphics[trim= 0cm 0cm 0cm 0cm,width=0.5\columnwidth]{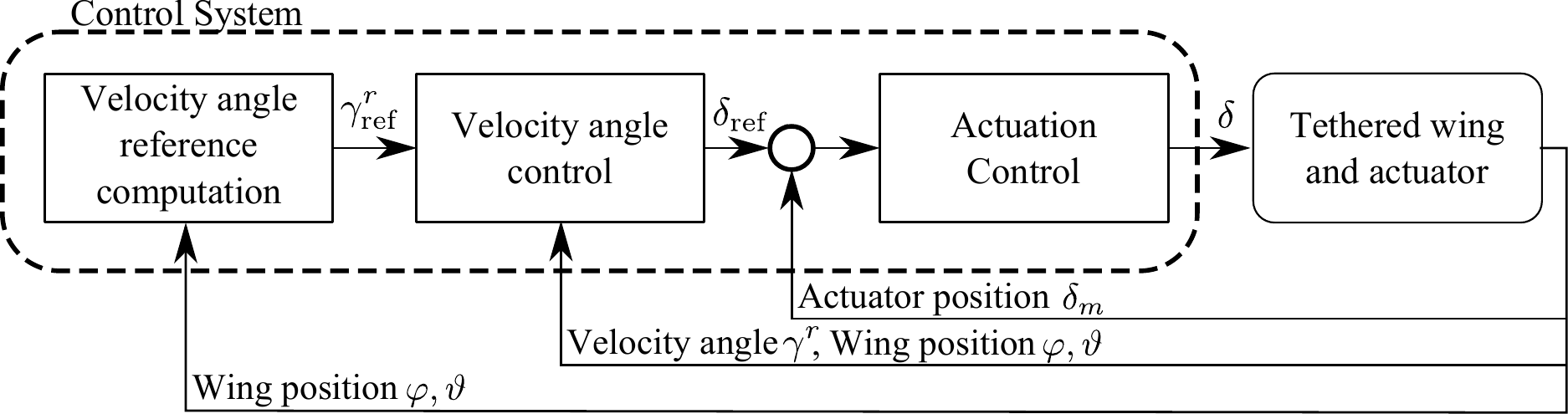}
  \caption[]{Control scheme overview using the regularized velocity angle}
  \label{fig:Control_Scheme_GAreg}
 \end{center}
\end{figure}
Besides the use of the regularized velocity angle as feedback variable the main difference between the retraction and the traction phases lies in the computation of the velocity angle reference $\gamma_{\text{ref}}^{\, r}$.
Therefore, we will only recall briefly the equations describing the inner control loops for the sake of completeness (see \cite{FaZg13} for details) and focus here on the outer control loop, responsible for providing the velocity angle controller with a suitable reference. 

Neglecting higher-order effects and external disturbances, the actuation system can be modeled as a second order system. The closed loop dynamics of the actuation control loop are then given by
\begin{IEEEeqnarray}{rCl}\label{eqn:Actuation_dyn}
 \ddot{\delta}_m &=& \omega_{\text{cl}}^2\delta_{\text{ref}}-2\zeta_{\text{cl}}\omega_{\text{cl}}\dot{\delta}_m-\omega_{\text{cl}}^2\delta_m\, ,
\end{IEEEeqnarray}
where $\delta_m$ is the actuator's position, $\delta_{\text{ref}}$ is the actuator's position reference, and $\omega_{\text{cl}}$ and $\zeta_{\text{cl}}$ are the natural frequency and damping, respectively, of the actuation control loop. The steering deviation is then obtained as $\delta=K_\delta\delta_m$, where $K_\delta$ is a known constant depending on the mechanical setup of the system. In our case, $K_\delta=1$. The velocity angle control loop consists of a proportional controller given by
\begin{IEEEeqnarray}{rCl}\label{eqn:VelocityAngleCtrlLaw}
 \delta_{\text{ref}} &=& K_c\left(\gamma^{\, r}_{\text{ref}}-\gamma^{\, r}\right)\, ,
\end{IEEEeqnarray}
where the gain $K_c$ is chosen by the designer.

As already mentioned, the goal of the retraction controller is to stabilize the wing at a static position in terms of $\phi$ and $\theta$ at the border of the wind window, e.g. $\phi-\phi_W=\pm\pi/2$, and at a given elevation angle $\theta_{\text{ref}}$. 
As seen in the previous section from \eqref{eqn:Gamma_reg}, we have $\gamma^{\, r}=\pi/2$ for a static position of the wing with $\phi-\phi_W=\pi/2$. This corresponds to a wing position on the left of the wind window as seen from the GU. Similarly, if a position on the right of the wind window is considered, i.e. $\phi-\phi_W=-\pi/2$, the regularized velocity angle becomes $\gamma^{\, r}=-\pi/2$. For simplicity, we will now only consider positions on the left of the wind window for the retraction phase, i.e. $\phi-\phi_W=\pi/2$ (the application to positions on the right of the wind direction is straightforward). 

Using the point-mass model \eqref{eqn:EqnMot} of the tethered wing, it can be shown that there exist equilibrium points at the border of the wind window, whose values are a function (for a given wing) of the steering input $\delta$ and the absolute wind speed. These equilibrium points can be computed as usual by setting all time derivatives of the model states to zero and solving \eqref{eqn:EqnMot} for a given steering input. Additionally, they can also be found by numerical simulations of the point-mass model employing a constant steering input. This suggest that these equilibrium points are open-loop stable and have a non-empty region of attraction, as it is revealed also by commonly used analysis techniques (see e.g. \cite{Khalil01}).

Inspired by the above considerations, we propose the following feedback control strategy to compute a reference value for the velocity angle:
\begin{IEEEeqnarray}{rCl}\label{eqn:GA_Ref_Retract}
 \gamma^{\, r}_{\text{ref}} = K_{\theta}(\theta_{\text{ref}}-\theta)+\frac{\pi}{2},\qquad K_{\theta}<0\,,
\end{IEEEeqnarray}
where $\theta_\text{ref}$ is a reference elevation angle chosen by the user, which should theoretically correspond to an equilibrium point for the wing at the side of the wind window.
From \eqref{eqn:GA_Ref_Retract}, one can note that, if the elevation of the wing is smaller than the reference elevation, the velocity angle reference is smaller than $\pi/2$, thus demanding the wing to move towards the zenith of the wind window, and vice-versa for a larger elevation than the reference elevation we have $\gamma^{\, r}_{\text{ref}}>\pi/2$. This reference is saturated to $\gamma^{\, r}_{\text{ref}}\in[\gamma_{\text{min}},\gamma_{\text{max}}]$ to prevent the wing from turning away too much from the wind direction. Such situation could in fact give place to a transient in crosswind conditions, which would increase the traction force unnecessarily.

The scalar gain $K_c$ for the velocity angle controller and the scalar gain $K_{\theta}$ for the reference computation are chosen by the designer.
By using \eqref{eqn:GA_Ref_Retract} in the outer loop of the control scheme (see \figref{fig:Control_Scheme_GAreg}), the resulting control system is linear (time varying) and controller gains $K_{\theta}$ and $K_c$ can be found, such that robust stability is achieved in the face of model uncertainty and different wind conditions. 
In particular, we can rewrite the system dynamics in terms of angle errors
\begin{IEEEeqnarray}{rCl}
 \Delta\gamma^{\, r}    &=& \gamma^{\, r}_{\text{ref}}-\gamma^{\, r}\\
 \Delta\theta           &=& \theta_{\text{ref}} - \theta \label{eqn:DeltaTH_def}
\end{IEEEeqnarray}
and of the position and velocity of the actuation system, $\delta_m$ and $\dot{\delta}_m$. 
In order to formulate the error dynamics, we need an intermediate step to include the dynamics of the angle $\theta$. To this end, we note that the apparent wind velocity component in the tangent plane, $|\vec{W}_a^p|$, in $\theta$ direction, given by $r\dot{\theta}$, is by definition of $\gamma$ equal to $|\vec{W}_a^p|\cos{\gamma}$ (compare \eqref{eqn:Gamma_definition}) which can equivalently be written as $|\vec{W}_a^p|\sin{(\pi/2-\gamma)}$. Since the wing tends to align itself with the wind direction, we assume that $\pi/2-\gamma^{\, r}$ is small, so that we can linearize its trigonometric functions. Then, the dynamics of the $\theta$ angle can be written as:
\begin{IEEEeqnarray}{rCl}\label{eqn:TH_dyn_smpl}
 \dot{\theta} &=& \frac{|\vec{W}_a^p|}{r}\left(\frac{\pi}{2}-\gamma^{\, r}\right)\, .
\end{IEEEeqnarray}

We can now state the system dynamics by using \eqref{eqn:GammaDotLaw},\eqref{eqn:Actuation_dyn}-\eqref{eqn:DeltaTH_def}, and setting $\mathbf{x} = [\Delta\theta,\Delta\gamma^{\, r},\delta_m,\dot{\delta}_m]^T$ (where ${}^T$ stands for the matrix transpose operation) as

\begin{IEEEeqnarray}{rCl}\label{eqn:Lin_CL_Sys}
 \renewcommand\arraystretch{1.2}
 \dot{\vec{x}} &=& 
 \underbrace{\begin{bmatrix}
              K_{\theta}\frac{|\vec{W}_a^p|}{r}   & -\frac{|\vec{W}_a^p|}{r}           &  0                    & 0\\
              K_{\theta}^2\frac{|\vec{W}_a^p|}{r} & -K_{\theta}\frac{|\vec{W}_a^p|}{r} & -KK_{\delta}          & 0\\
              0                                   & 0                                  &  0                    & 1\\
              0                                   & K_c\omega_{\text{cl}}^2            & -\omega_{\text{cl}}^2 & -2\zeta_{\text{cl}}\omega_{\text{cl}}
             \end{bmatrix}
 }_{A_{\text{cl}}}
 \mathbf{x} + w\, .
\end{IEEEeqnarray}

In \eqref{eqn:Lin_CL_Sys}, the term $K$ corresponds to the uncertain gain in \eqref{eqn:GammaDotLawCoeffs1} and depends on the system's parameters as well as the wind and the flight conditions. The term $w\in\mathbb{R}^4$ accounts for effects of gravity and apparent forces of \eqref{eqn:GammaDotLawCoeffs2},
as well as for the forces exerted by the lines on the actuator. System \eqref{eqn:Lin_CL_Sys} has time-varying, uncertain linear dynamics characterized by the matrix $A_{\text{cl}}(\Theta)$, where $\Theta=[K,|\vec{W}_a^p|]$. Upper and lower bounds for such parameters can easily be derived on the basis of the available knowledge on the system. These bounds can be employed to compute points $\Theta^i,\, i=1,\ldots,n_v$, such that $\Theta\in \text{conv}(\Theta^i)$, where conv denotes the convex hull. Then, the closed-loop system \eqref{eqn:Lin_CL_Sys} results to be robustly stable if there exists a positive definite matrix $P=P^T\in\mathbb{R}^{4x4}$ such that (see e.g. \cite{Amato06}):
\begin{IEEEeqnarray}{rCl}\label{eqn:QuadStabCrit}
 A_{\text{cl}}^T(\Theta^i)P + PA_{\text{cl}}(\Theta^i) \prec 0,\, i=1,\ldots,n_v\, ,
\end{IEEEeqnarray}

Condition \eqref{eqn:QuadStabCrit} can be checked by using an LMI solver. 
In \secref{sec:results} we show with simulations and experiments that indeed a single pair $(K_c,K_\theta)$ achieves robust stability of the control system, as predicted by the described theoretical analysis. The two scalar gains, i.e. the values of $K_c$ and $K_\theta$, can be tuned at first by using the equations \eqref{eqn:GammaDotLaw} and \eqref{eqn:GA_Ref_Retract}, and then via experiments.

As shown in \secref{sec:results}, this approach is able to stabilize the wing at the border of the wind window but is dependent on an estimate of the wind speed and wind direction at the wing's location.
Since these might not be straightforward to obtain, an alternative approach is presented in the next section, which relies only on directly measurable quantities.

\subsection{Retraction Control Based on Elevation Dynamics}
\label{ssec:Elevation_Dynamics_Based_Retraction}
As an alternative to the regularized velocity angle, we propose here to use the elevation angle $\theta$ as feedback variable. The main advantage of such an approach is a higher reliability, since the elevation is directly measured (see e.g. \cite{FHBK12}) and there is no need to estimate the wind direction at the wing's location.
We will carry out the controller's design on the basis of a new model that links the elevation dynamics to the steering input, which we derive next.

From \eqref{eqn:EqnMot}, we can write the $\theta$-dynamics as:
\begin{IEEEeqnarray}{rCl}\label{eqn:THdd_dyn}
 \ddot{\theta} &=& \frac{\mathbf{F}\cdot\mathbf{e}_N}{rm}-\sin{(\theta)}\cos{(\theta)}\dot{\phi}^2-\frac{2}{r}\dot{\theta}\dot{r}\, .
\end{IEEEeqnarray}

We consider the following assumptions:
\begin{assumption}
 (Steady State) The wing is at a steady state angular position at the border of the wind window. \hfill$\blacksquare$
 \label{as:SteadyState}
\end{assumption}
\begin{assumption}
 (Small roll angle) The control input $\psi$ is sufficiently small such that its trigonometric functions can be linearized.\hfill$\blacksquare$
 \label{as:SmallInput}
\end{assumption}

\asref{as:SteadyState} implies that the sum of the forces acting on the wing in $\mathbf{e}_E$ direction is zero and that the angular velocities of the wing are small. Thus, effects from apparent forces are small. Moreover, this also implies that the wing's longitudinal axis is aligned with the apparent wind direction.
\asref{as:SmallInput} is also reasonable, since for example during our test flights the roll angle was within $\pm\degr{18}$.
We can now state our result concerned with the wing's model:

\begin{proposition}
 \label{prop:Elevation_Dynamics}
 Let assumptions \ref{as:SteadyState}-\ref{as:SmallInput} hold. Then, the $\theta$ dynamics \eqref{eqn:THdd_dyn} can be written as
 \begin{IEEEeqnarray}{rCl}\label{eqn:TH_dyn_approx}
 \ddot{\theta} &=& -\mathcal{C}\delta-\frac{g\cos{(\theta)}+2\dot{\theta}\dot{r}}{r}\, ,
\end{IEEEeqnarray}
where
\begin{IEEEeqnarray}{rCl}\label{eqn:C_gain}
 \mathcal{C}   &=& \frac{\rho AC_L}{2rmd_s}\left(1+\frac{1}{E_{eq}^2}\right)W_0\sin{(\phi)}|\vec{W}_a|\, .
\end{IEEEeqnarray}
\end{proposition}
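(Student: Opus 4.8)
The plan is to start from the exact elevation dynamics \eqref{eqn:THdd_dyn} and reduce it term by term to \eqref{eqn:TH_dyn_approx}. The last term $-\frac{2}{r}\dot{\theta}\dot{r}$ already appears verbatim in the target, so it is retained untouched; it encodes the reel-in coupling and must survive the approximation. The centripetal term $-\sin(\theta)\cos(\theta)\dot{\phi}^2$ is quadratic in $\dot{\phi}$, which is small by \asref{as:SteadyState}, so I would discard it as a higher-order effect. All the remaining work is therefore the evaluation of $\mathbf{F}\cdot\mathbf{e}_N$, which I would split according to the force decomposition $\mathbf{F}=\mathbf{F}_g+\mathbf{F}_a+\mathbf{F}_c$.

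Two of these three contributions are immediate. The traction force $\mathbf{F}_c$ acts along the tether, i.e. parallel to $\mathbf{e}_D$, so $\mathbf{F}_c\cdot\mathbf{e}_N=0$. Gravity is $\mathbf{F}_g=-mg\,\mathbf{e}_z$; projecting it with the first row of $A_{LG}$ in \eqref{eqn:A_LG} gives $\mathbf{F}_g\cdot\mathbf{e}_N=-mg\cos(\theta)$, hence $\mathbf{F}_g\cdot\mathbf{e}_N/(rm)=-g\cos(\theta)/r$. This already supplies the $-g\cos(\theta)/r$ piece of \eqref{eqn:TH_dyn_approx}, and it remains only to show that the aerodynamic part satisfies $\mathbf{F}_a\cdot\mathbf{e}_N/(rm)=-\mathcal{C}\delta$ with $\mathcal{C}$ as in \eqref{eqn:C_gain}.

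The aerodynamic term is the crux. By \asref{as:SteadyState} the tangent-plane velocity $\mathbf{v}_P^p$ is negligible, so $\mathbf{W}_a$ reduces to the wind $\mathbf{W}$ of \eqref{eqn:L_W} minus the reel-in term, and the wing's longitudinal axis aligns with the tangent-plane projection $\mathbf{W}_a^p$. I would write $\mathbf{F}_a$ as a lift $\frac{1}{2}\rho AC_L|\mathbf{W}_a|^2\,\hat{\mathbf{l}}(\psi)$ orthogonal to $\mathbf{W}_a$ together with a drag along $\mathbf{W}_a$, where the roll rotates the lift about the apparent-wind axis, $\hat{\mathbf{l}}(\psi)=\hat{\mathbf{l}}_0\cos(\psi)+(\hat{\mathbf{w}}\times\hat{\mathbf{l}}_0)\sin(\psi)$ with $\hat{\mathbf{w}}=\mathbf{W}_a/|\mathbf{W}_a|$. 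Linearizing in $\psi$ (\asref{as:SmallInput}) retains the roll term $\frac{1}{2}\rho AC_L|\mathbf{W}_a|^2\psi\,(\hat{\mathbf{w}}\times\hat{\mathbf{l}}_0)$. The decisive observation is that $\hat{\mathbf{w}}\times\hat{\mathbf{l}}_0$ lies in the tangent plane along $\hat{\mathbf{w}}\times\mathbf{e}_D$, whose $\mathbf{e}_N$-component equals the local-east component of $\hat{\mathbf{w}}$; by \eqref{eqn:L_W} this is proportional to $W_0\sin(\phi-\phi_W)=W_0\sin(\phi)$, which is precisely the wind factor in \eqref{eqn:C_gain}. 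Inserting the geometric steering relation $\psi\approx\delta/d_s$ of \cite{FaZg13} (which produces the $1/d_s$) and the near-border approximation $|\mathbf{W}_a^p|\approx|\mathbf{W}_a|$ then collapses the expression to the $W_0\sin(\phi)|\mathbf{W}_a|$ form, giving $\mathbf{F}_a\cdot\mathbf{e}_N/(rm)=-\mathcal{C}\delta$.

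I expect the aerodynamic bookkeeping to be the main obstacle, for two reasons. First, the zero-roll part of $\mathbf{F}_a\cdot\mathbf{e}_N$ (both the $\hat{\mathbf{l}}_0$ lift term and the drag-along-$\mathbf{W}_a$ term) must be argued to vanish or be negligible at the operating point; this is clean exactly at the border $\phi\to\pi/2$ where $\hat{\mathbf{w}}$ has no local-north component, and I would justify keeping $\sin(\phi)$ as a slowly varying coefficient otherwise. Second, and more delicate, is reproducing the factor $(1+1/E_{eq}^2)$ to the \emph{first} power: the pure lift tilt yields only the leading "$1$", and the extra $1/E_{eq}^2$ must come from the combined lift-and-drag resultant in the projection, following \cite{CaFM09c}. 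Getting this factor exactly right — notably one power lower than the squared factor in $K$ of \eqref{eqn:GammaDotLawCoeffs1}, a discrepancy I would attribute to the static retraction regime (apparent wind dominated by $W_0$, no division by a large tangential wing speed) versus the crosswind regime underlying \eqref{eqn:GammaDotLaw} — is where the careful accounting lives. As a final consistency check I would verify the sign, confirming that a positive $\delta$ drives $\theta$ downward as required by the steering convention, and re-derive $\mathcal{C}$ by paralleling the computation of $K$ in \cite{FaZg13} specialized to $\mathbf{v}_P^p\approx0$.
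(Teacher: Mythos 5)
Your skeleton matches the paper's proof: the traction force has no $\mathbf{e}_N$ component, gravity projects to $-mg\cos(\theta)$ via the first row of \eqref{eqn:A_LG}, the centrifugal and $\dot{\phi}$-dependent contributions are dropped by \asref{as:SteadyState}, and the input enters through the roll-induced tilt of the lift vector with $\psi\approx\delta/d_s$ and the local-east wind component $W_0\sin(\phi)$ supplying the wind factor in \eqref{eqn:C_gain}. The paper carries this out with explicit rotation matrices parameterized by the heading $\xi$, a yaw-like angle $\eta$, and the out-of-plane angle $\Delta\alpha$ of the apparent wind, rather than your coordinate-free cross-product picture, but the bookkeeping is the same.

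The genuine gap is the step you yourself flag and then defer: the factor $\left(1+1/E_{eq}^2\right)$. Your \enquote{rotate the lift about the apparent-wind axis and linearize in $\psi$} model produces only the leading $1$; asserting that the remaining $1/E_{eq}^2$ \enquote{must come from the combined lift-and-drag resultant} states the target rather than deriving it, and the drag term alone cannot supply it, since drag is directed along $\vec{W}_a$ and is independent of $\psi$ to first order. In the paper the factor arises from two ingredients your parameterization omits: (i) the apparent wind is tilted out of the tangent plane by $\Delta\alpha$, which \asref{as:SteadyState} fixes through the lift--drag balance along the heading as $\tan(\Delta\alpha)=1/E_{eq}$, see \eqref{eqn:APP_LoverDEq}; and (ii) rolling the wing by $\psi$ rotates the lift direction by the additional angle $\eta=\arcsin(\tan(\Delta\alpha)\tan(\psi))\approx\psi/E_{eq}$, see \eqref{eqn:APP_eta_eq}--\eqref{eqn:eta_linear}. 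The $\mathbf{e}_N$ projection of the aerodynamic force then contains $\sin(\eta)\sin(\Delta\alpha)+\cos(\Delta\alpha)\,\psi\approx\left(1/E_{eq}^2+1\right)\psi$, which is exactly the missing factor; without introducing $\Delta\alpha$ and $\eta$ (or an equivalent) your construction cannot close. A smaller correction: the vanishing of the zero-roll aerodynamic terms does not require the border limit $\phi\to\pi/2$; it holds for any $\phi$ because $F_L\sin(\Delta\alpha)=F_{D,eq}\cos(\Delta\alpha)$ by the same force balance that defines $\Delta\alpha$, so the lift and drag contributions to $\mathbf{F}\cdot\mathbf{e}_N$ at $\psi=0$ cancel identically rather than merely becoming negligible.
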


\begin{proof}
 See the Appendix.\hfill$\blacksquare$ 
\end{proof}

The model in \eqref{eqn:TH_dyn_approx} gives a direct relationship between the input $\delta$ and the elevation of the wing $\theta$. 
It is worth elaborating a bit more on this result and its implications.
As we can see from \eqref{eqn:TH_dyn_approx}, gravity and apparent forces have less influence with increasing tether length, since the linear acceleration remains constant such that the angular one is inversely proportional to the radius.
The term $\rho A C_L/(2rmd_s)(1+1/E_{eq}^2)$ in \eqref{eqn:C_gain} remains roughly constant during the retraction and is specific to the employed wing.
Equation \eqref{eqn:CWTFmodel} also implies that a larger area-to-mass ratio, $A/m$, gives in general a higher gain $\mathcal{C}$, and that the steering gain of wings with similar aerodynamic coefficients but different sizes should not change much, provided that they have similar $A/m$.

As regards the design of the retraction controller exploiting the model \eqref{eqn:TH_dyn_approx}, we consider again a hierarchical control loop, now consisting only of two nested loops, the actuation control loop and the elevation controller, shown in \figref{fig:Control_Scheme_THpos}.
\begin{figure}[htb]
 \begin{center}
  \includegraphics[trim= 0cm 0cm 0cm 0cm,width=0.5\columnwidth]{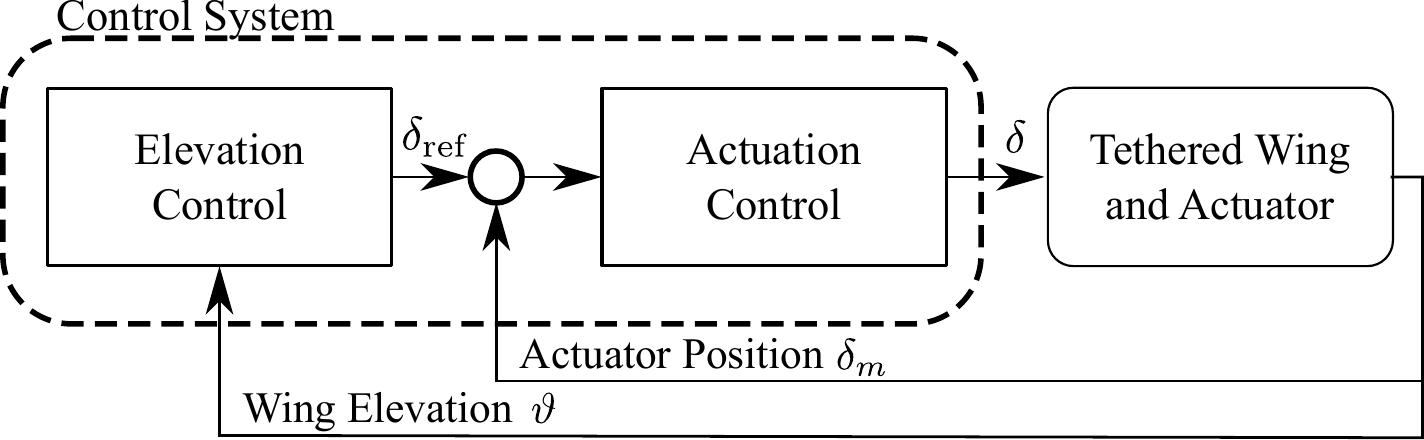}
  \caption[]{Control scheme overview using the wing elevation based controller}
  \label{fig:Control_Scheme_THpos}
 \end{center}
\end{figure}

To design the elevation controller, \eqref{eqn:TH_dyn_approx} is first linearized around an equilibrium point, which serves as reference position $\theta_{\text{ref}}$. As pointed out in \secref{ssec:Velocity_Angle_Based_Retraction}, such an equilibrium point can be found using the point-mass model \eqref{eqn:EqnMot}.
The resulting linear system is given by
\begin{IEEEeqnarray}{rCl}\label{eqn:UncertainLinSystem}
 \dot{\vec{x}}' &=& \begin{bmatrix}
                         0 & 1\\ \frac{g\sin{(\theta_{\text{ref}})}}{r} & -\frac{2\dot{r}}{r}
                       \end{bmatrix}\,\vec{x}'+
                       \begin{bmatrix}
                         0\\ -\mathcal{C}
                       \end{bmatrix}\,u\, ,
\end{IEEEeqnarray}
where $\vec{x}'=[\Delta\theta,\Delta\dot{\theta}]^T$ and $u=\delta$. The tracking error in $\theta$ and $\dot{\theta}$ are given as
\begin{IEEEeqnarray}{rCl}
 \Delta\theta       &=& \theta_{\text{ref}}-\theta\\
 \Delta\dot{\theta} &=& \dot{\theta}_{\text{ref}}-\dot{\theta}\, ,\label{eqn:DeltaDotTH}
\end{IEEEeqnarray}
where the reference values correspond to a static angular position, i.e. $\dot{\theta}_{\text{ref}}=0$.

We use a state feedback controller $K_\text{SF}$ of the form
\begin{IEEEeqnarray}{rCl}\label{eqn:K_statefeedback}
 z &=& -K_{\text{SF}}\,\vec{x}'\, ,
\end{IEEEeqnarray}
where $z=\delta_\text{ref}$ and $K_{\text{SF}}= [k_1^\text{SF}\; k_2^\text{SF}]$ is a vector of feedback gains that can be designed by means of standard techniques like pole placement ore linear-quadratic (LQ) regulation.
Again, it can be shown that there exists a matrix $K_{\text{SF}}$ for which the system is robustly stabilized in the presence of the uncertain, time-varying parameters. 
A robustness analysis can be carried out similarly to the one in \secref{ssec:Velocity_Angle_Based_Retraction}; the corresponding closed loop dynamics are given, using 
\eqref{eqn:Actuation_dyn}, \eqref{eqn:UncertainLinSystem}-\eqref{eqn:K_statefeedback}, $\delta=K_\delta\delta_m$, and $\vec{x}''= [\Delta\theta,\Delta\dot{\theta},\delta_m,\dot{\delta}_m]$, by
\begin{IEEEeqnarray}{rCl}\label{eqn:Lin_CL_Sys_Pos}
 \renewcommand\arraystretch{1.2}
 \dot{\vec{x}}'' = 
 \underbrace{\begin{bmatrix}
              0                              & 1                             &  0                                              & 0\\
              \frac{g\sin{(\theta_\text{ref})}}{r}      & -\frac{2\dot{r}}{r}           & -\mathcal{C}K_\delta & 0\\
              0                              & 0                             &  0                                              & 1\\
              -\omega^2_{\text{cl}}k_1^{SF}  & -\omega^2_{\text{cl}}k_2^{SF} & -\omega_{\text{cl}}^2                           & -2\zeta_\text{cl}\omega_{\text{cl}}
             \end{bmatrix}
 }_{A_{\text{cl}}}
 \vec{x}''+ w\, .
\end{IEEEeqnarray}
Here, the uncertain time-varying parameters are given by $\Theta = [r,\dot{r},\mathcal{C}]$.


\subsection{Discussion}
\label{ssec:RetrCtrl_Discussion}
We presented two control approaches for the retraction phase, one based on a regularized version of the velocity angle $\gamma$ and one based on the $\theta$-dynamics derived from the first principle model \eqref{eqn:EqnMot}. In the latter, we exploit a direct link between the input $\delta$ and the angular acceleration $\ddot{\theta}$, while the first approach does not directly consider explicitly the $\theta$ dynamics and relies on the turning rate $\dot{\gamma}$ instead. For the sake of comparison, also in the first approach one can extract the $\theta$ dynamics, in particular by considering \eqref{eqn:TH_dyn_smpl}, i.e. 
\begin{IEEEeqnarray*}{rCl}
 \dot{\theta} &=& \frac{|\vec{W}_a^p|}{r}\left(\frac{\pi}{2}-\gamma^{\, r}\right)\, ,
\end{IEEEeqnarray*}
where $|\vec{W}_a^p|$ is the apparent wind velocity projected onto the tangent plane to the wind window at the wing's location.
By taking the time derivative of \eqref{eqn:TH_dyn_smpl} and combining it with \eqref{eqn:GammaDotLaw}, and assuming that the apparent wind velocity $\vec{W}_a$ is constant and again that the wing stays at a constant $\phi$ position, we obtain:
\begin{IEEEeqnarray}{rCl}
 \label{eqn:THdd_dyn_smpl}
 \ddot{\theta} = -\frac{\rho AC_L}{2rmd_s}\left(1+\frac{1}{E_{eq}^2}\right)|\vec{W}_a|^2\delta - \frac{g\cos{(\theta)}}{r}\,.
\end{IEEEeqnarray}

Comparing this equation with the one derived from the model \eqref{eqn:TH_dyn_approx}, one can see a few differences. First, the second term in the right-hand side of \eqref{eqn:THdd_dyn_smpl} does not contain the term related to apparent forces. This comes from the fact that the $\theta$-dynamics in \eqref{eqn:TH_dyn_smpl} do not consider the influence of the reeling speed $\dot{r}$.
The term related to gravity is the same since we assume a $\gamma\approx\pi/2$ for the retraction. Note that, as one would expect, for both models the influence of the additive terms on the angular acceleration become smaller for longer tether length.

The gain relating the input $\delta$ to $\ddot{\theta}$, denoted by $\mathcal{C}$ in \eqref{eqn:TH_dyn_approx}, is quite similar to the corresponding gain in \eqref{eqn:THdd_dyn_smpl}.
The difference comes from how the force component in $\theta$ direction, $\mathbf{F}\cdot\mathbf{e}_N$, is calculated. In \eqref{eqn:TH_dyn_approx}, this component is calculated by considering the apparent wind in the tangent plane at the wing's position, i.e. $W_0\sin{(\phi)}+r\cos{(\theta)}\dot{\phi}$ where $W_0\sin{(\phi)}$ is the dominating factor, see the Appendix.
On the other hand, the corresponding term in \eqref{eqn:TH_dyn_smpl} is $\vec{W}_a^p$ which corresponds, assuming a static angular position at the border of the wind window and constant line length, to $\vec{W}_a\simeq W_0\sin{(\phi)}$.
In summary, it has to be noted that the structure of the two models is the same, which explains why the corresponding controllers have similar qualitative behavior, as it will be shown in \secref{sec:results}, but with quite marked differences in tracking performance.

\subsection{Reeling}
\label{ssec:Retr_Reeling}
As mentioned above, the reeling can be considered, from the point of view of the position control system, as an external disturbance since its main influence is on the magnitude of the apparent wind speed and all other effects are comparably small.
This is the reason why both the traction and retraction controllers can be designed independently from the reeling speed control. 
For simplicity, we therefore adopt a simple reeling control scheme for both phases by setting a torque reference on the generators.
During the traction phase the torque reference is chosen with a feedback strategy that the optimal real-out speed \cite{Loyd80} is tracked.
In particular, assuming a steady state reeling, i.e. constant speed, where the optimal traction force has to be matched by the motor torque, we have
\begin{IEEEeqnarray}{rCl}
 T_m &=& F_c^*r_d
\end{IEEEeqnarray}
where $T_m$ is the torque applied by the motor, $r_d$ is the radius of the drum, and $F_c^*$ is the optimal traction force for maximum power production for a given wind situation.
A simplified model of the traction force $F_c$ has been first introduced in \cite{Loyd80} and then subsequently refined in several contributions, see e.g. \cite{FaMP11}:
\begin{IEEEeqnarray}{rCl}\label{eqn:CWTFmodel}
 F_c(t) = |\vec{F}_c(t)| &=& \mathscr{C}|\vec{W}_a^r|^2
\end{IEEEeqnarray}
with
\begin{IEEEeqnarray}{rCl}
 \mathscr{C} &=& \frac{1}{2}\rho AC_LE_{eq}^2\left(1+\frac{1}{E_{eq}^2}\right)^{\frac{3}{2}}\, ,
\end{IEEEeqnarray}
where $\rho$ is the air density, $A$ is the wing reference area, $C_L$ is the lift coefficient, $E_{eq}$ is the equivalent efficiency, and $\vec{W}_a^r$ is the apparent wind vector component in tether direction consisting of the wind speed $\vec{W}$ and the reeling speed $\dot{r}$. It can be shown that the optimal reeling speed is equal to one third of the wind speed in tether direction, see e.g. \cite{Loyd80}.
Therefore we can expresse $\vec{W}_a^r$ as
\begin{IEEEeqnarray}{rCl}
 |\vec{W}_a^r| &=& W^r-\dot{r}\\
               &=& 3\dot{r}^*-\dot{r}\, ,
\end{IEEEeqnarray}
where $W^r$ is the wind speed in tether direction. Thus the motor torque to achieve a desired reel-out speed $\dot{r}$ is given as
\begin{IEEEeqnarray}{rCl}
 T_m &=& \mathscr{C}r_d\left(3\dot{r}^*-\dot{r}\right)^2\, .
\end{IEEEeqnarray}
Then, setting the motor torque equal to
\begin{IEEEeqnarray}{rCl}\label{eqn:T_mot_opt}
 T_m &=& 4\mathscr{C}r_d\dot{\hat{r}}^2\, ,
\end{IEEEeqnarray}
where $\dot{\hat{r}}$ is the actual measured reeling speed, leads to a steady state solution of $\dot{\hat{r}}=\dot{r}^*$.
It can be shown that such a solution is an asymptotically stable steady-state when the feedback reeling strategy \eqref{eqn:T_mot_opt} is used.

Additionally, we included a lower and an upper bound on the torque reference to avoid wing stall and mechanical overload of the system, respectively. 
During the retraction, a constant torque reference is chosen to achieve a high reel-in speed, in order to increase the duty-cycle of the overall power generation scheme.

Indeed, the interplay between the wing dynamics and reeling speed could be exploited using a multivariable control technique with the aim to optimize the power output.
If an additional actuator to change the pitch angle of the wing is also present, i.e. allowing one to change the lift and drag coefficients of the wing, the efficiency of the system could be further increased. 
These topics are not considered in this paper but they represent further research directions.

\section{Results}
\label{sec:results}

We first compare the proposed control approaches for the retraction phase in simulation, employing the non-linear point-mass model for tethered wings \eqref{eqn:EqnMot}.
The main system and controller parameters are shown in Table \ref{tab:system_params} and Table \ref{tab:control_params_Ctrl}, respectively. The terms relating to $\gamma^{\, r}$ apply only to the approach from \secref{ssec:Velocity_Angle_Based_Retraction} and for the state feedback approach an LQ regulator with weighting matrices equal to the identity matrix were used.

\ctable[caption = System Parameters,
        cap     = System Parameters,
        label   = tab:system_params,
        pos     = tbp,
        maxwidth = \linewidth,
        captionskip = -2ex
       ]{lcrl}{
}{ \FL
Name & Symbol & Value & Unit\NN
\cmidrule(r){1-4}
Wing effective area      & $A$      & $9$           & \si{\meter\squared} \NN
Kite span                & $d_s$    & $2.7$         & \si{\meter}\NN
Kite mass                & $m$      & $2.45$        & \si{\kilogram} \NN
Tether length            & $r$      & $[50\ldots 150]$ & \si{\meter} \NN
Tether diameter          & $d_t$    & $0.003$       & \si{\meter} \NN
Tether density           & $\rho_t$ & $970$         & \si[per-mode=symbol]{\kilogram\per\cubic\meter} \NN
Air density              & $\rho$   & $1.2$         & \si[per-mode=symbol]{\kilogram\per\cubic\meter} \LL
}

\ctable[caption = {Control Parameters},
        cap     = {Control Parameters},	
        label   = {tab:control_params_Ctrl},
        pos     = tbp,
        maxwidth = \linewidth,
        captionskip = -2ex,
       ]{lcrl}{
}{\FL       
Name & Symbol & Value & Unit\NN
\cmidrule(r){1-4}
Actuator control loop damping                            & $\zeta_\text{cl}$       & $0.7$          & $-$ \NN
Actuator control loop natural frequency                  & $\omega_\text{cl}$      & $78$           & \si[per-mode=symbol]{\radian\per\second} \NN
Mechanical actuation ratio                               & $K_\delta$              & $1$            & $-$ \NN
$\gamma^{\, r}$ feedback gain (traction)                 & $K_{\text{c}}$          & $0.056$        & \si[per-mode=symbol]{\meter\per\radian} \NN
$\gamma^{\, r}$ feedback gain (retraction)               & $K_{\text{c}}$          & $0.28$         & \si[per-mode=symbol]{\meter\per\radian} \NN
$\gamma^{\, r}_{\text{ref}}$  feedback gain (retraction) & $K_{\theta}$            & $-2.5$         & $-$ \NN
State feedback control gain 1 (retraction)               & $k^{\text{SF}}_1$       & $-1.4$         & \si[per-mode=symbol]{\meter\per\radian} \NN 
State feedback control gain 2 (retraction)               & $k^{\text{SF}}_2$       & $-4.6$         & \si[per-mode=symbol]{\meter\second\per\radian} \NN 
Elevation reference (retraction)                         & $\theta_{\text{ref}}$   & $1$            & \si{\radian}
\LL
}

In \figref{fig:3DTraj_PwrCycle}, a typical trajectory of the wing from launch until the end of the first  power cycle is shown. At first, the wing is flown in crosswind conditions, flying figure-eight paths until it reaches the maximum tether length of $\Meter{150}$, using the controller described in \cite{FaZg13}. Then, the retraction phase is started using either the controller based on the regularized velocity angle \eqref{eqn:VelocityAngleCtrlLaw}-\eqref{eqn:GA_Ref_Retract} or the feedback controller \eqref{eqn:K_statefeedback}, while the tether is reeled-in until a length of $\Meter{50}$ is reached. 
At that point, the traction phase controller of \cite{FaZg13} is used again to complete the power cycle. In \figref{fig:PHTH_PwrCycle}, the time courses of the position angles $\phi$ and $\theta$ during one power cycle for both control approaches are shown. Around $\seconds{73}$, the controller switches from traction to retraction and tracks the reference $\theta_{\text{ref}}=\radi{1}$. Note that $\phi$ becomes slightly larger than $\pi/2$ due to the reel-in speed, indicating that the wing surpasses the GU location against the wind. Around $\seconds{138}$, the controller switches from retraction to traction and the wing starts again flying figure-eight paths in crosswind conditions.

\begin{figure}[tbh]
 \begin{center}
    \includegraphics[width=0.5\columnwidth]{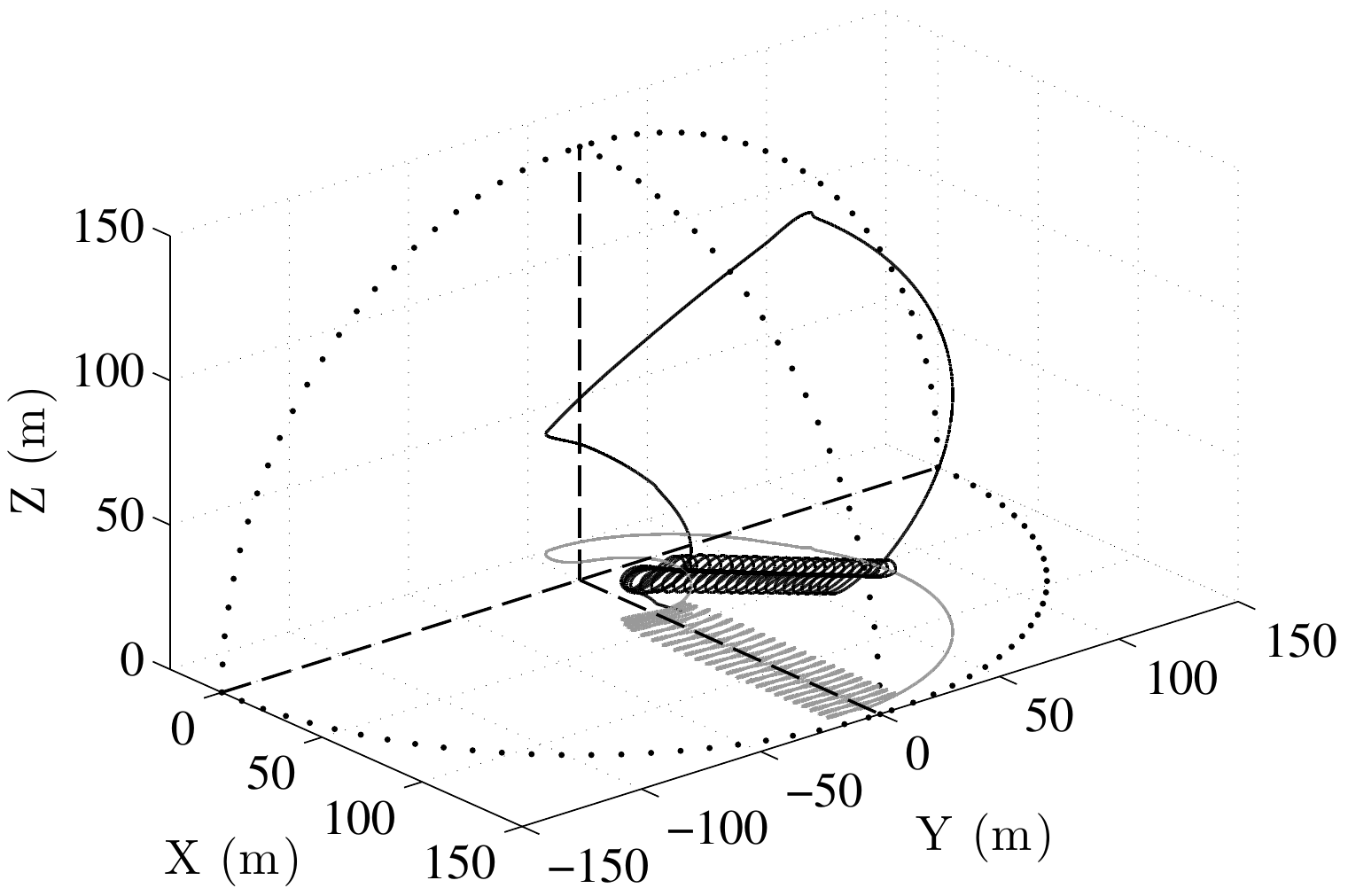}
    \caption{Simulation results. Typical 3D trajectory (black) and its projection (gray) on the ground of the tethered wing during one flown power cycle.}
    \label{fig:3DTraj_PwrCycle}
 \end{center}
\end{figure}

\begin{figure}[t]
  \centering
  \begin{subfigure}[b]{0.5\columnwidth}
   \begin{center}
    \includegraphics[width=\columnwidth]{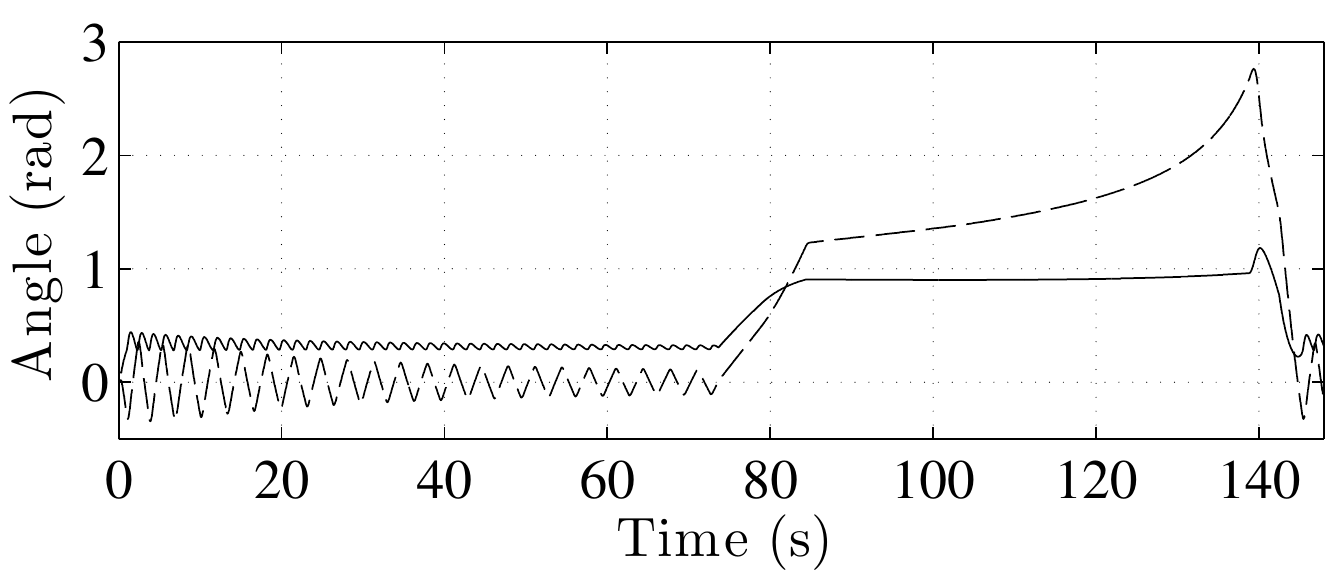}
    \caption{Using the regularized velocity angle for the retraction.}
    \label{fig:PHTH_PwrCycle_RegGACtrl}
   \end{center}
  \end{subfigure}
  \\
  \vspace{0.5cm}
  \begin{subfigure}[b]{0.5\columnwidth}
   \begin{center}
    \includegraphics[width=\columnwidth]{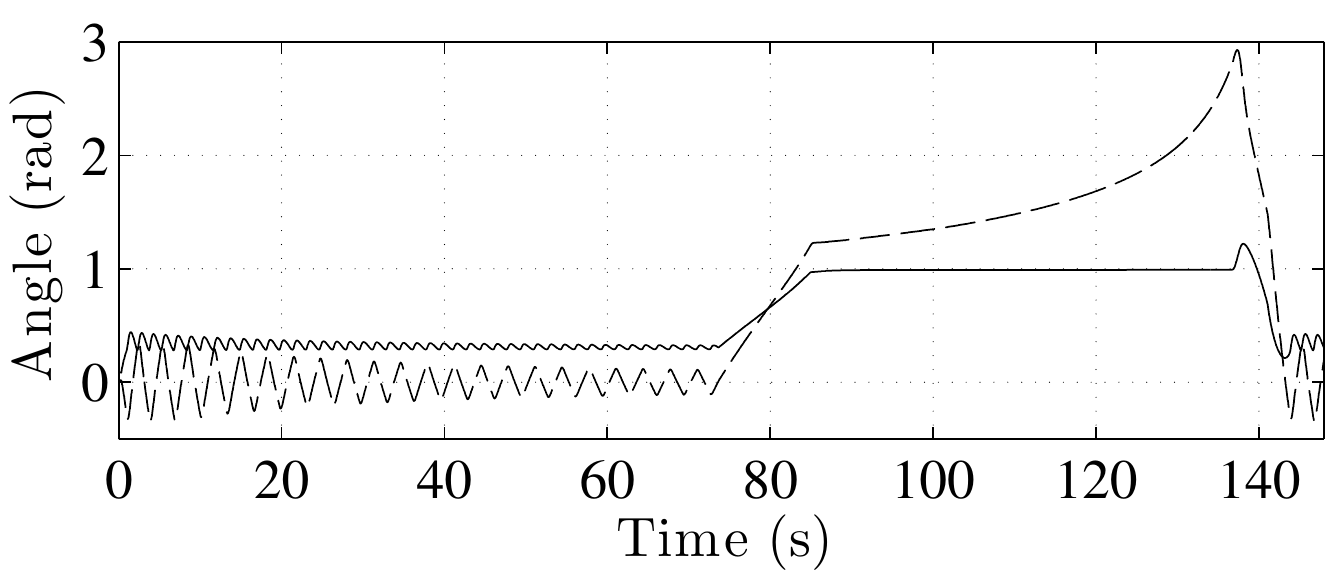}
    \caption{Using a state feedback controller for the retraction.}
    \label{fig:PHTH_PwrCycle_PosCtrl}
   \end{center}
  \end{subfigure}
  \caption{Simulation results. Time courses of $\phi$ (dashed) and $\theta$ (solid) of one power cycle with a reel-in speed of $\mps{2.5}$ and $W_0=\mps{5}$.}
  \label{fig:PHTH_PwrCycle}
\end{figure}

Both control approaches lead to qualitatively similar results, as it can be seen from \figref{fig:PHTH_PwrCycle}. The main noticeable difference is the tracking of the $\theta$ reference during retraction which is better achieved by the approach using a state feedback controller. This is expected, since the latter controller employs directly the elevation angle and its rate, which are both measured with good accuracy, as feedback variables, while the former controller uses the elevation angle to compute a reference for the regularized velocity angle, whose estimate can be inaccurate due to the uncertainty in the wind speed estimation (i.e. the tuning parameter $c$ in \eqref{eqn:Gamma_reg}). Such uncertainty gives rise to a bias in the feedback variable, which in turn reflects into a larger tracking error.
This is shown in \figref{fig:TrackErr_PwrCycles} where the average tracking error of one retraction phase for different reel-in speeds and different wind speeds, respectively, are plotted. 

 \begin{figure}[t]
  \centering
  \begin{subfigure}[b]{0.5\columnwidth}
   \begin{center}
    \includegraphics[trim= 0cm 0cm 0cm 0cm,width=\columnwidth]{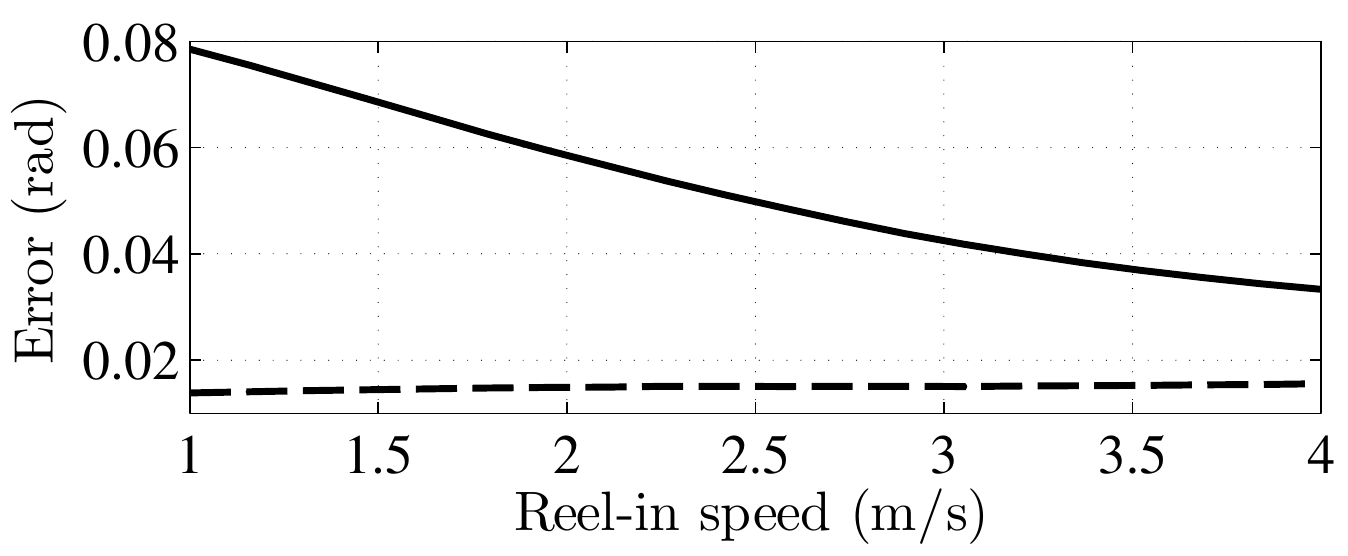}
    \caption[]{Average $\theta$ tracking error for different reel-in speeds for the regularized velocity angle based control (solid) and for the state feedback control (dashed) during one retraction phase with $W_0=\mps{5}$.}
    \label{fig:TrackErr_PwrCycles_Rdot}
   \end{center} 
  \end{subfigure}
  \\
  \vspace{0.5cm}
  \begin{subfigure}[t]{0.5\columnwidth}
   \begin{center}
    \includegraphics[trim= 0cm 0cm 0cm 0cm,width=\columnwidth]{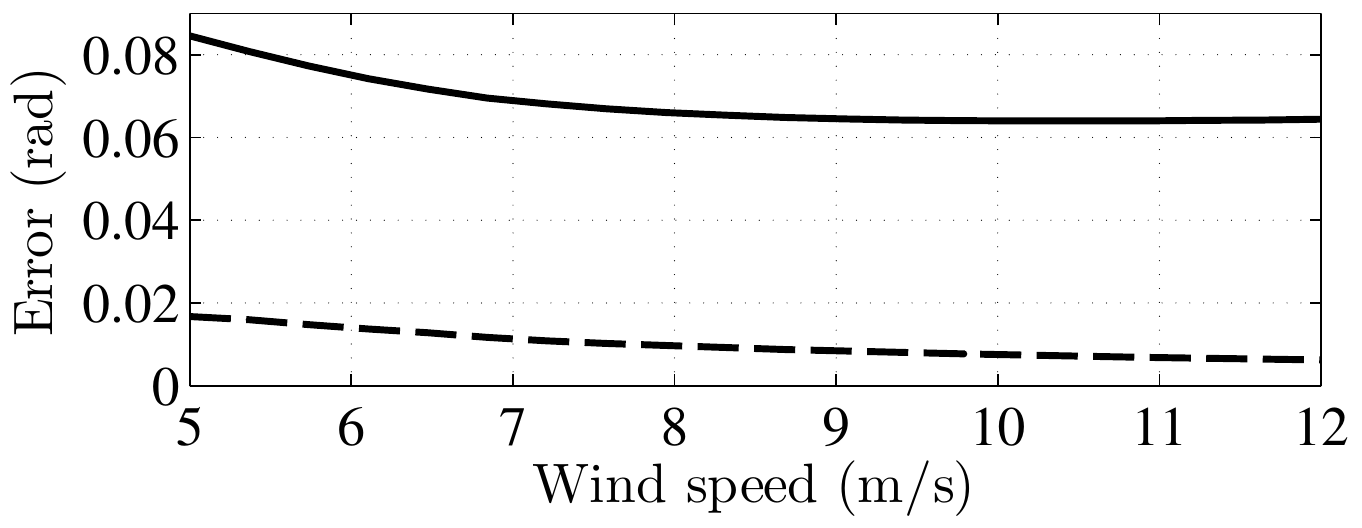}
    \caption[]{Average $\theta$ tracking error for different wind speeds $W_0$ using the regularized velocity angle based control (solid) and the state feedback control (dashed) during one retraction phase with a reel-in speed of $\mps{2.5}$.}
    \label{fig:TrackErr_PwrCycles_W0}
   \end{center}
  \end{subfigure}
  \caption{Average $\theta$ tracking error during one retraction phase.}
  \label{fig:TrackErr_PwrCycles}
\end{figure}


Real-world experiments have also successfully been carried out on the Swiss Kite Power prototype, shown in \figref{fig:GS1_front}. The employed wing was a three line \airush surf kite with an area of $\sqm{9}$.
In \figref{fig:Exp_PwrCycle_GAreg}, the results of experimental test flights employing the retraction control strategy proposed in \secref{ssec:Velocity_Angle_Based_Retraction} using the regularized velocity angle is shown. In \figref{fig:Exp_PHTHTraj_PwrCycle_GAreg}, the wing path during a power cycle in the $(\phi,\theta)$-plane is shown. The wing is controlled to fly along figure-eight paths until it reaches the maximum tether length and then flies horizontally to the border of the wind window. Such a transient phase can be achieved by setting a new target point for the traction controller at the border of the wind window. Then, the retraction controller stabilizes the wing during the reel-in of the tether. Once at the minimum tether length, the wing turns back to fly figure-eight paths roughly aligned with the wind direction. In \figref{fig:Exp_Gamma_PwrCycle_GAreg}, the velocity angle and its reference are shown, and in \figref{fig:Exp_PHTH_PwrCycle_GAreg} the corresponding time courses of $\phi$ and $\theta$ are shown.
Note that the wing flies downwards to a low $\theta$ angle when starting a new traction phase. This is due to the increasing wing speed and rather small $K_c$ gain used for this maneuver.
This problem can be alleviated by increasing the steering gain for this phase, as we show later in \figref{fig:Exp_PHTHTraj_PwrCycle_ElCtrl}. 
A projection of the wing path on the ground plane can be seen in \figref{fig:Exp_3DTrajProjXY_PwrCycle_GAreg}. Note that the wing surpasses the GU upwind, since it reaches a negative position in the $\vec{e}_x$ direction.
The average wind speed was approximately \mps{4.6}. The time course of the wind measured roughly \Meter{5} above the ground can be seen in \figref{fig:Exp_Wspd_PwrCycle_GAreg}. The resulting traction force on the main line during the power cycle is shown in \figref{fig:Exp_Fmain_PwrCycle_GAreg}.
It can be seen that there is a significant drop in traction force during the retraction phase as expected from the considerations above, leading to a positive net energy output of the system.
The time course of the tether length can be seen in \figref{fig:Exp_R_PwrCycle_GAreg}. A movie of the autonomous power cycles is available online: \cite{Wing_moviePwrCycle}.

\begin{figure}[t]
 \centering
 \begin{subfigure}[b]{0.49\columnwidth}
  \begin{center}
   \includegraphics[trim= 0cm 0cm 0cm 0cm,width=\columnwidth]{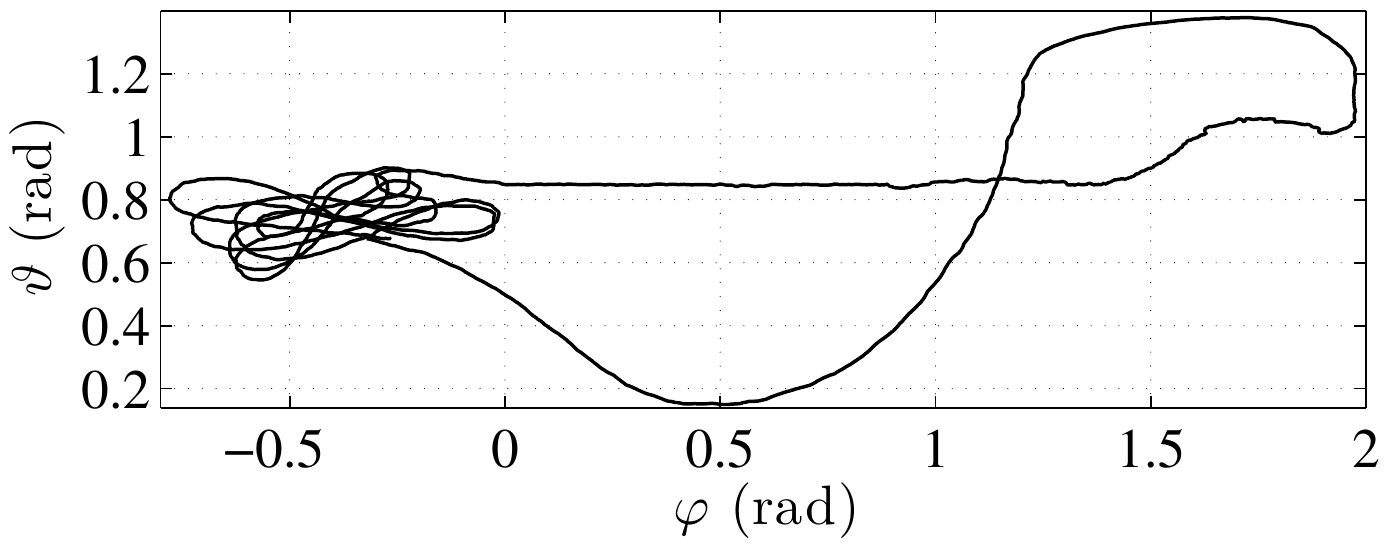}
   \caption[]{The wing trajectory in $\phi$ and $\theta$. The wind direction was roughly $\phi_W\approx\radi{-0.4}$.}
   \label{fig:Exp_PHTHTraj_PwrCycle_GAreg}
  \end{center}
 \end{subfigure}
 \hfill
 \begin{subfigure}[b]{0.49\columnwidth}
  \begin{center}
   \includegraphics[trim= 0cm 0cm 0cm 0cm,width=\columnwidth]{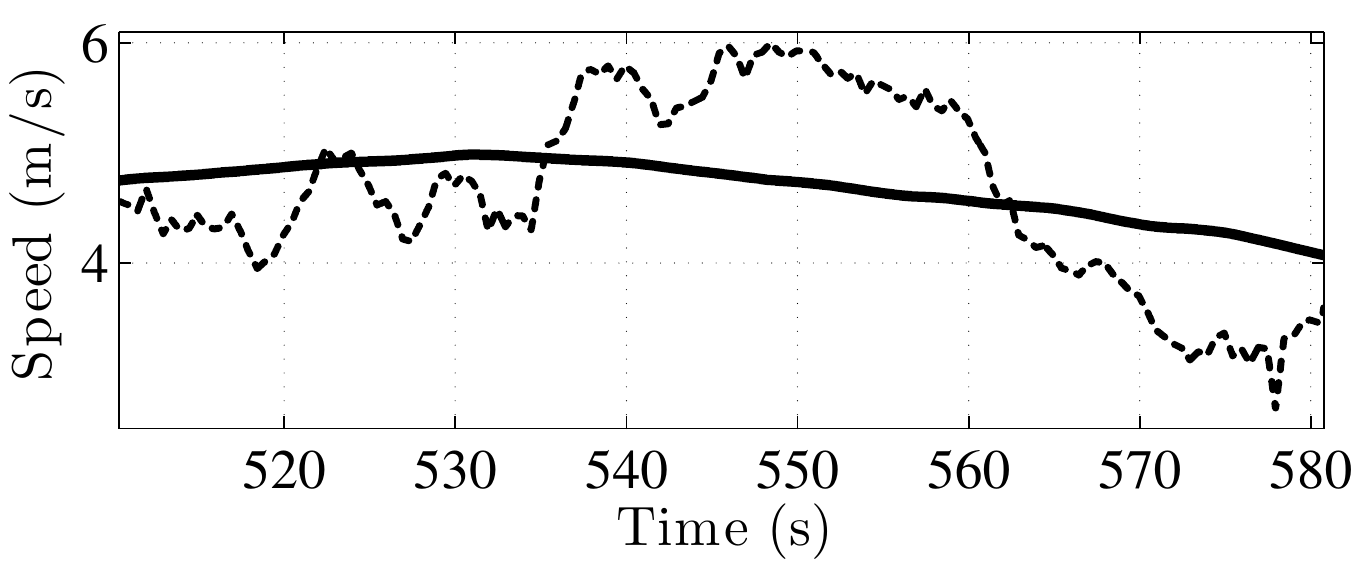}
   \caption[]{Time courses of the wind speed (dotted) and the \minu{1} average wind speed (solid).}
   \label{fig:Exp_Wspd_PwrCycle_GAreg}
  \end{center}
 \end{subfigure}
 \\
 \vspace{0.5cm}
 \begin{subfigure}[b]{0.49\columnwidth}
  \begin{center}
   \includegraphics[trim= 0cm 0cm 0cm 0cm,width=\columnwidth]{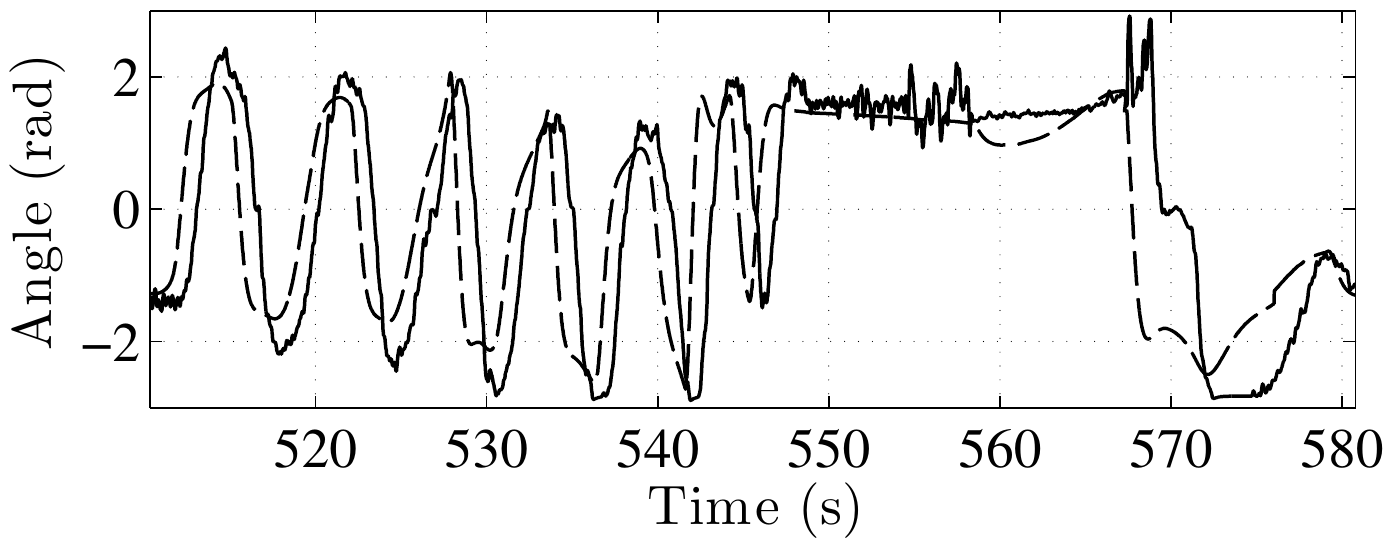}
   \caption[]{Time courses of $\gamma$ (solid) and $\gamma_{\text{ref}}$ (dashed). At roughly $t\in[\seconds{558},\seconds{567}]$, the regularized version of $\gamma$ \eqref{eqn:Gamma_reg} is used for feedback control.}
   \label{fig:Exp_Gamma_PwrCycle_GAreg}
  \end{center}
 \end{subfigure}
 \hfill
 \begin{subfigure}[b]{0.49\columnwidth}
  \begin{center}
   \includegraphics[trim= 0cm 0cm 0cm 0cm,width=\columnwidth]{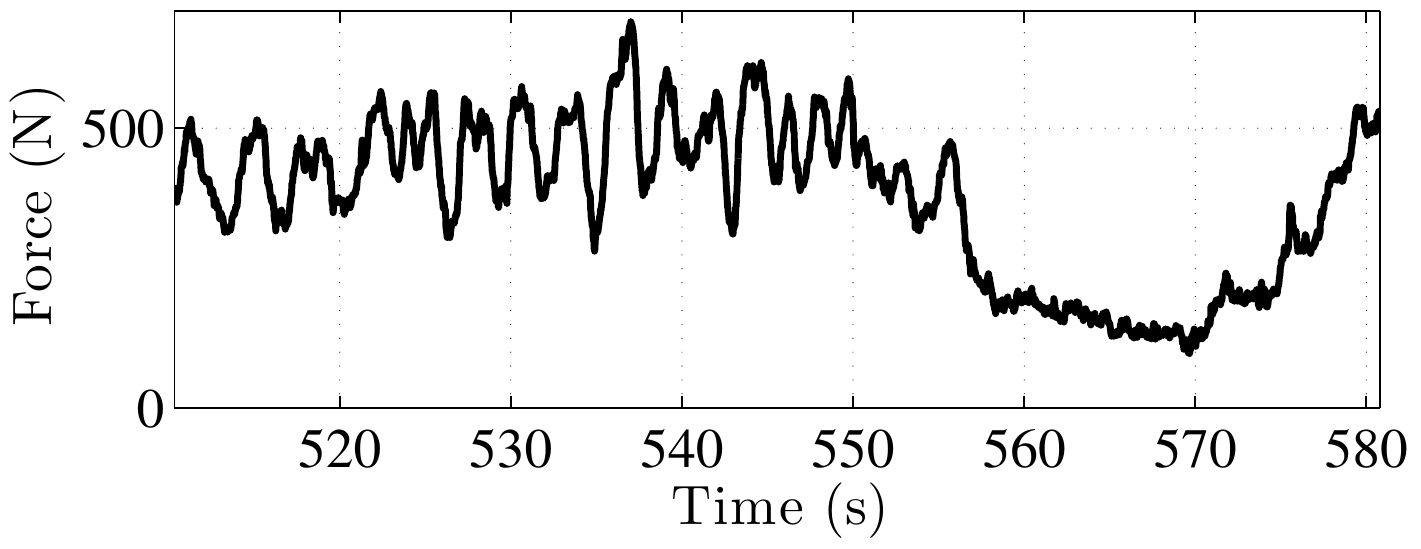}
   \caption[]{Time course of the traction force on the main line (solid).}
   \label{fig:Exp_Fmain_PwrCycle_GAreg}
  \end{center}
 \end{subfigure}
 \\
 \vspace{0.5cm}
 \begin{subfigure}[b]{0.49\columnwidth}
  \begin{center}
   \includegraphics[trim= 0cm 0cm 0cm 0cm,width=\columnwidth]{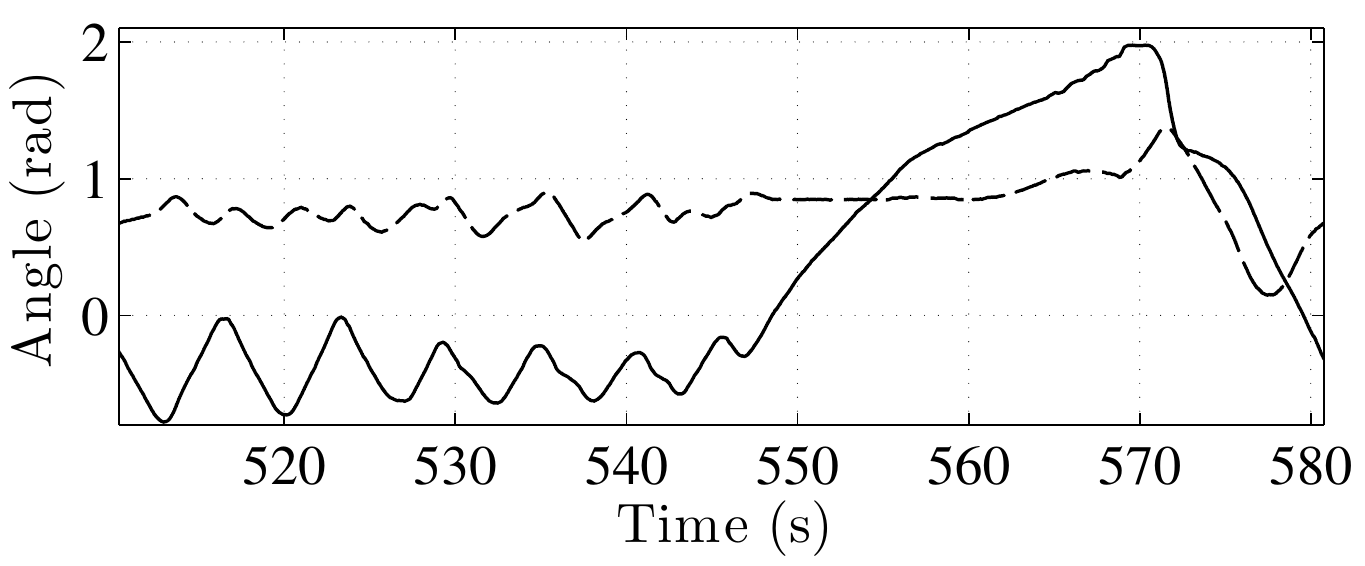}
   \caption[]{Time courses of $\phi$ (solid) and $\theta$ (dashed).}
   \label{fig:Exp_PHTH_PwrCycle_GAreg}
  \end{center}
 \end{subfigure}
 \hfill
 \begin{subfigure}[b]{0.49\columnwidth}
  \begin{center}
   \includegraphics[trim= 0cm 0cm 0cm 0cm,width=\columnwidth]{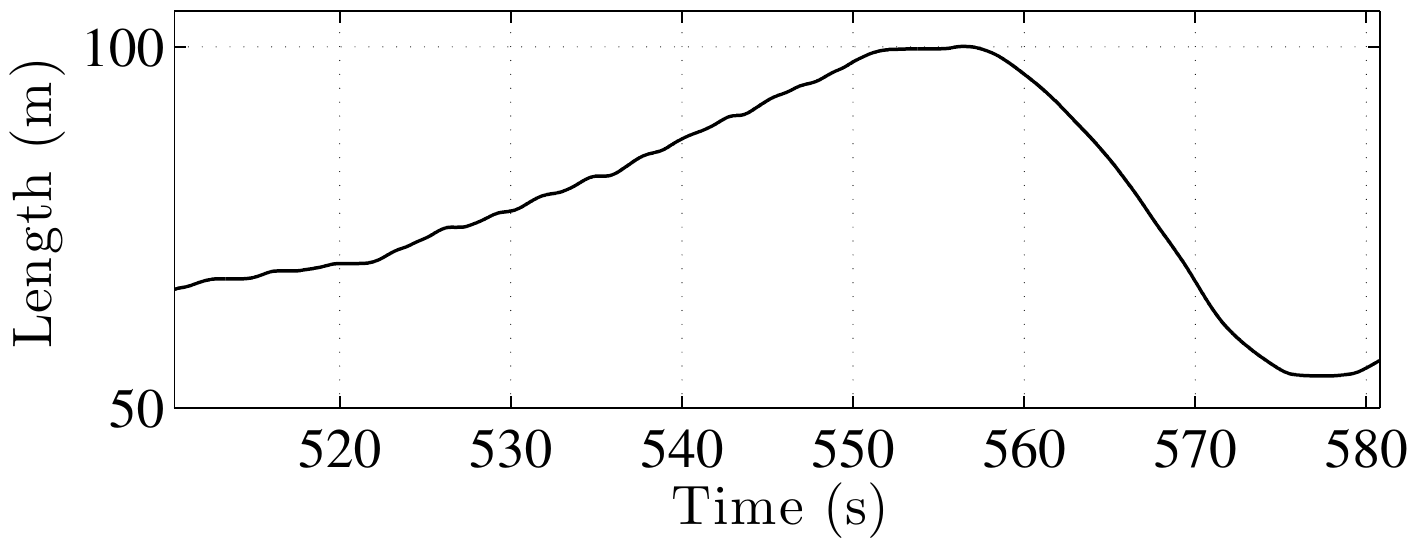}
   \caption[]{Time course of tether length $r$ (solid).}
   \label{fig:Exp_R_PwrCycle_GAreg}
  \end{center}
 \end{subfigure}
 \caption{Experimental results using the velocity angle based retraction with an \airush \sqm{9} kite during one power cycle.}
 \label{fig:Exp_PwrCycle_GAreg}
\end{figure}

\begin{figure}[tbh]
 \begin{center}
  \includegraphics[trim= 0cm 0cm 0cm 0cm,width=0.5\columnwidth]{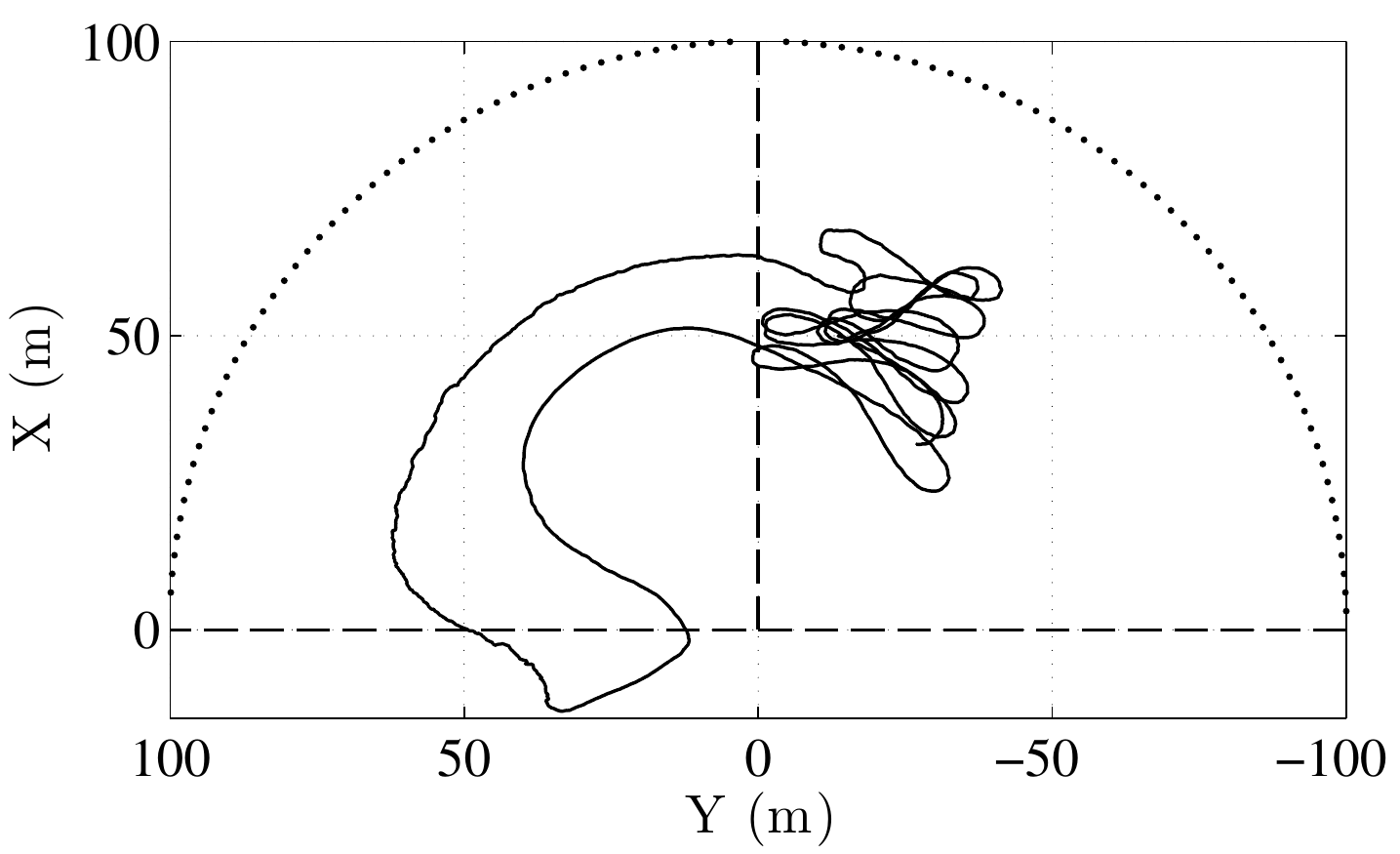}
  \caption[]{Experimental results. Wing path projected on to the ground, corresponding to \figref{fig:Exp_PHTHTraj_PwrCycle_GAreg}.}
  \label{fig:Exp_3DTrajProjXY_PwrCycle_GAreg}
 \end{center}
\end{figure}

Figure~\ref{fig:Exp_PwrCycle_ElCtrl} shows the results of experimental test flights, where the approach based on the elevation dynamics has been used, with the same \airush $\sqm{9}$ kite.

In \figref{fig:Exp_PHTHTraj_PwrCycle_ElCtrl}, the wing path during a power cycle in the $(\phi,\theta)$-plane is shown. 
Again, the retraction controller stabilizes the wing at the border of the wind window during the reel-in of the tether.
Once at the minimum tether length, the wing turns back to fly towards a downwind position using the traction controller \cite{FaZg13}.
In \figref{fig:Exp_PHTH_PwrCycle_ElCtrl} the corresponding time courses of $\phi$ and $\theta$ during the power cycle are shown. One can see that the elevation-based retraction controller corrects the low $\theta$ position of the wing (starting roughly at \seconds{510}) towards $\theta_{\text{ref}}=\radi{1}$.
The wind speed was approximately \mps{5}, see \figref{fig:Exp_Wspd_PwrCycle_ElCtrl}. The corresponding traction force on the main line is visible in \figref{fig:Exp_Fmain_PwrCycle_ElCtrl} and it can be seen that during the   retraction phase the force drops by a factor of two to three.
The resulting tether length during the power cycle is shown in \figref{fig:Exp_R_PwrCycle_ElCtrl}.

There are a few notable differences between \figref{fig:Exp_PHTHTraj_PwrCycle_ElCtrl} and \figref{fig:Exp_PHTHTraj_PwrCycle_GAreg}. To decrease the traction force on the lines and the $\phi$ position overshoot behind the GU against the wind, the pitch angle of the wing was slightly increased in the experiment shown in \figref{fig:Exp_PHTHTraj_PwrCycle_ElCtrl} compared to \figref{fig:Exp_PHTHTraj_PwrCycle_GAreg}, resulting in a lower efficiency of the wing.  Additionally, the gain $K_c$ was kept at a higher value once the new traction phase starts until the wing is in a downwind position (see \tabref{tab:control_params_Ctrl}). This compensates the decrease of the wing's steering gain (see \eqref{eqn:GammaDotLawCoeffs}) and prevents the wing from flying to a low elevation once the traction phase starts, compare \figref{fig:Exp_PHTHTraj_PwrCycle_GAreg} and \figref{fig:Exp_PHTHTraj_PwrCycle_ElCtrl}. Also, as we expected, the wing elevation based retraction controller shows a better tracking performance for $\theta$. This can clearly be seen in \figref{fig:Exp_PHTHTraj_PwrCycle_ElCtrl} and \figref{fig:Exp_PHTH_PwrCycle_ElCtrl} (around 510-\seconds{520}).

\begin{figure}[t]
 \centering
 \begin{subfigure}[b]{0.49\columnwidth}
  \begin{center}
   \includegraphics[trim= 0cm 0cm 0cm 0cm,width=\columnwidth]{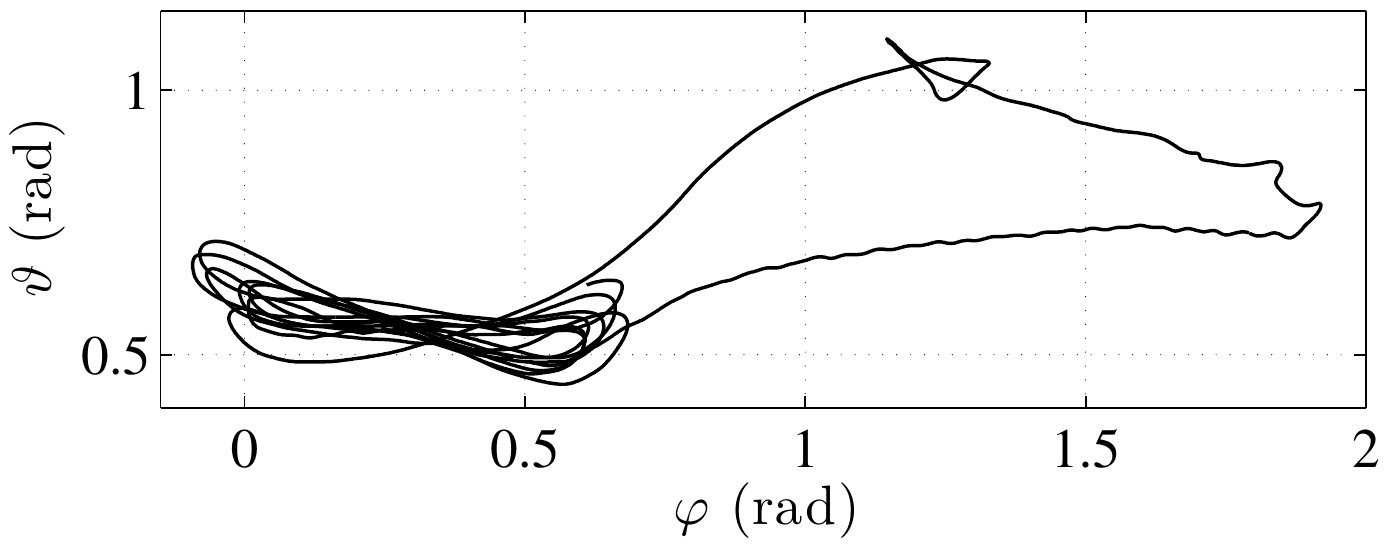}
   \caption[]{The wing trajectory of one power cycle is shown in $\phi$ and $\theta$. The wind direction was roughly $\phi_W\approx\radi{+0.5}$.}
   \label{fig:Exp_PHTHTraj_PwrCycle_ElCtrl}
  \end{center}
 \end{subfigure}
 \hfill
 \begin{subfigure}[b]{0.49\columnwidth}
  \begin{center}
   \includegraphics[trim= 0cm 0cm 0cm 0cm,width=\columnwidth]{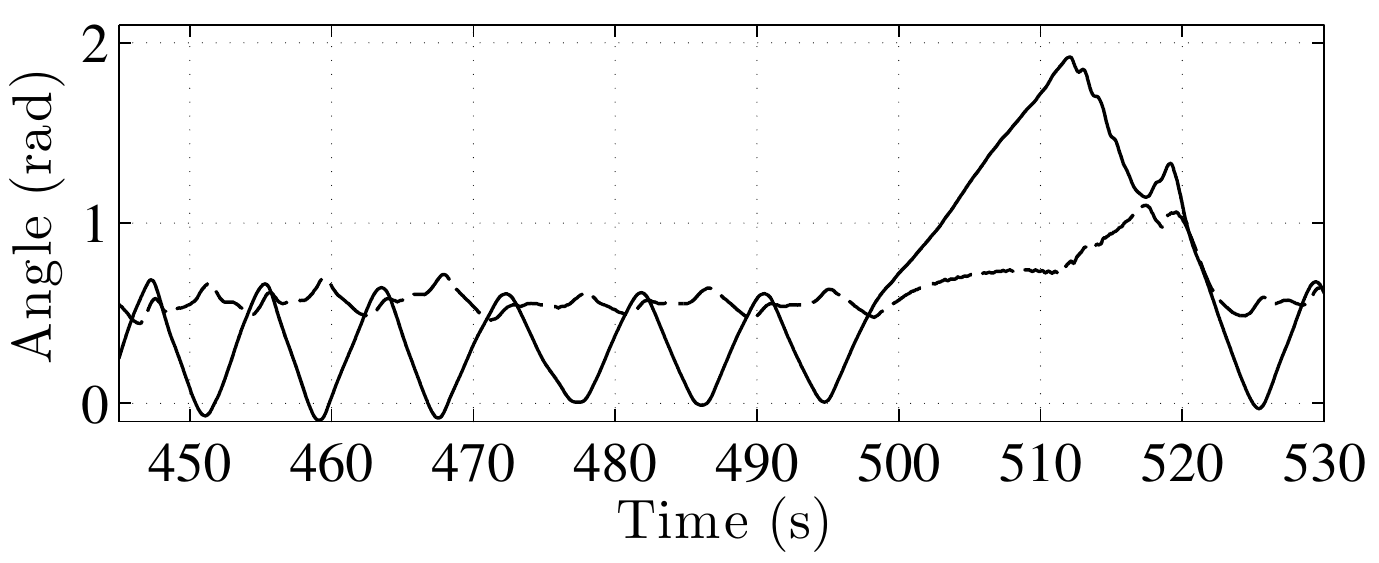}
   \caption[]{Time courses of $\phi$ (solid) and $\theta$ (dashed).}
   \label{fig:Exp_PHTH_PwrCycle_ElCtrl}
  \end{center}
 \end{subfigure}
 \\
 \vspace{0.5cm}
 \begin{subfigure}[b]{0.49\columnwidth}
  \begin{center}
   \includegraphics[trim= 0cm 0cm 0cm 0cm,width=\columnwidth]{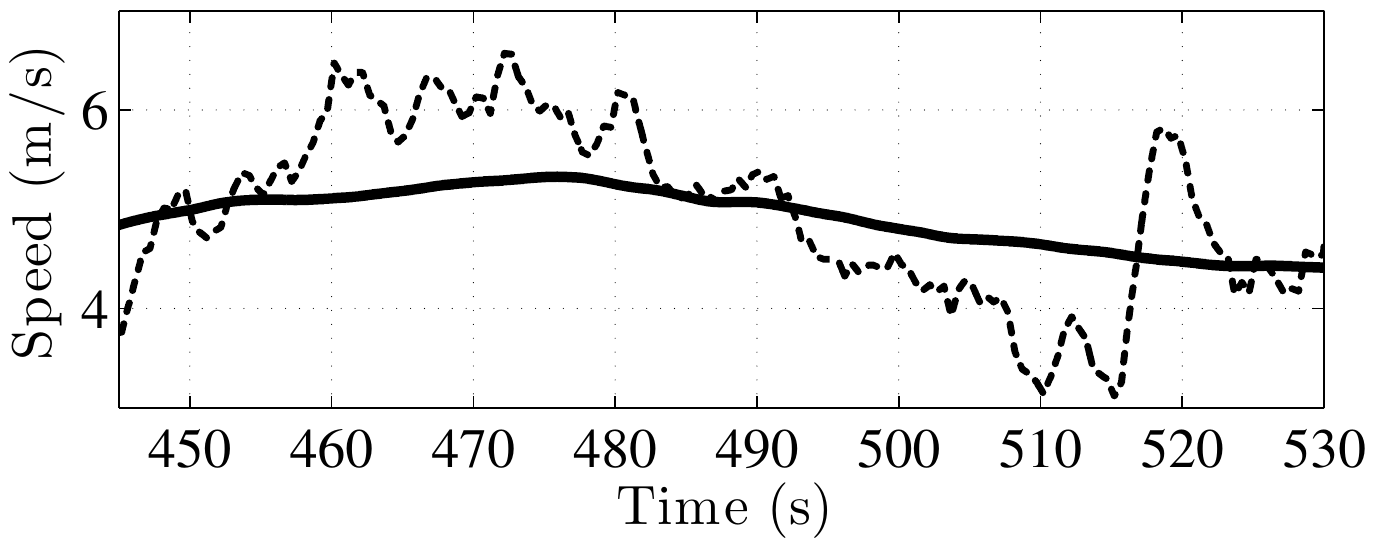}
   \caption[]{Time courses of the wind speed (dotted) and the \minu{1} average wind speed (solid).}
   \label{fig:Exp_Wspd_PwrCycle_ElCtrl}
  \end{center}
 \end{subfigure}
 \hfill
 \begin{subfigure}[b]{0.49\columnwidth}
  \begin{center}
   \includegraphics[trim= 0cm 0cm 0cm 0cm,width=\columnwidth]{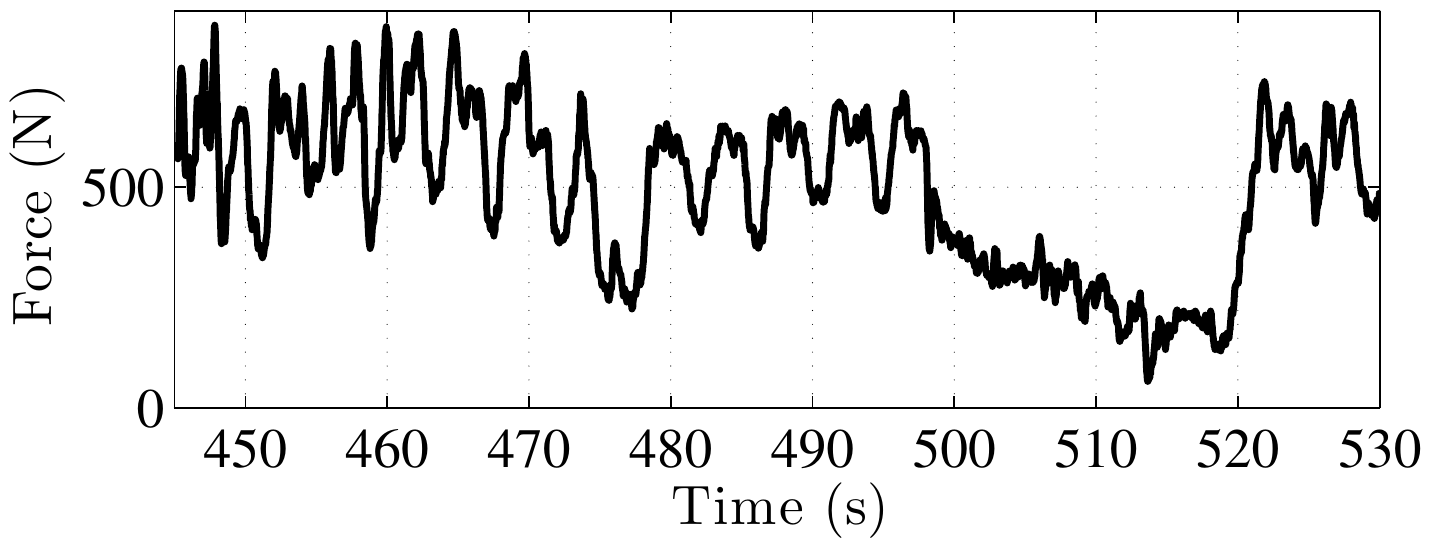}
   \caption[]{Time course of the traction force on the main line (solid).}
   \label{fig:Exp_Fmain_PwrCycle_ElCtrl}
  \end{center}
 \end{subfigure}
 \\
 \vspace{0.5cm}
 \begin{subfigure}[b]{0.49\columnwidth}
  \begin{center}
   \includegraphics[trim= 0cm 0cm 0cm 0cm,width=\columnwidth]{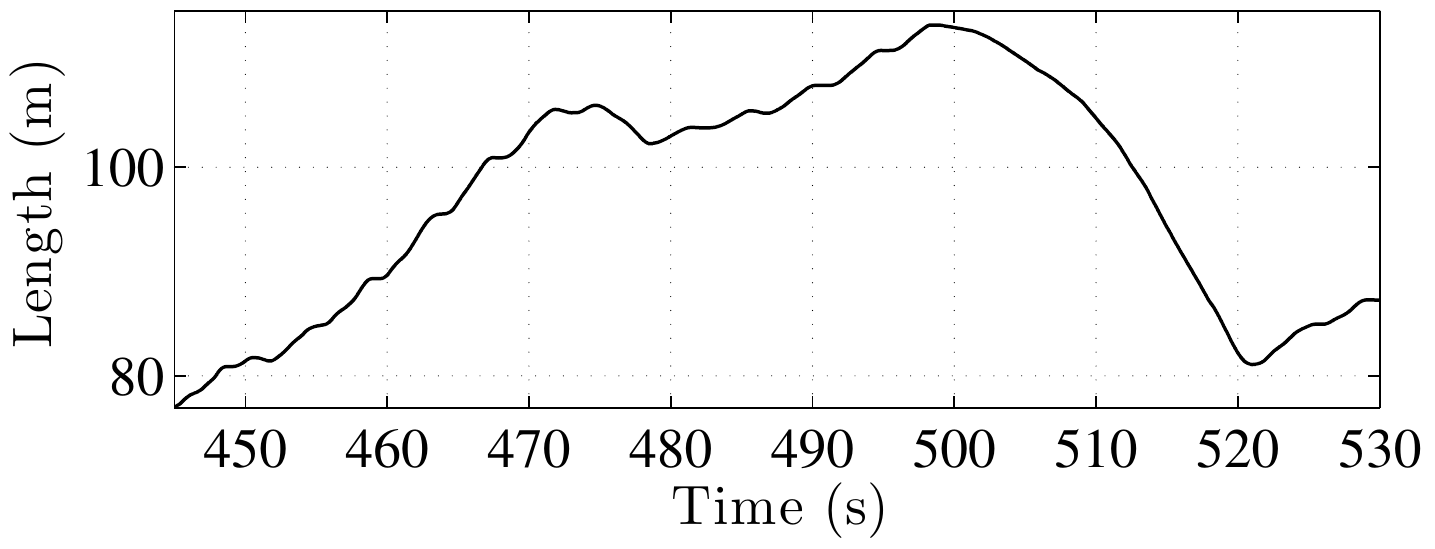}
   \caption[]{Time course of tether length $r$ (solid).}
   \label{fig:Exp_R_PwrCycle_ElCtrl}
  \end{center}
 \end{subfigure}
 \caption{Experimental results using the elevation dynamics based retraction with an \airush \sqm{9} kite during one power cycle.}
 \label{fig:Exp_PwrCycle_ElCtrl}
\end{figure}

In \figsref{fig:Exp_Fmain_PwrCycle_GAreg_wCWTF} and \ref{fig:Exp_Fmain_PwrCycle_ElCtrl_wCWTF} a comparison of the traction force between the actual measurements during one power cycle and the simplified traction force model \eqref{eqn:CWTFmodel} are shown. 
In \figref{fig:Exp_Pmain_PwrCycle_GAreg_wCWTF} and \figref{fig:Exp_Pmain_PwrCycle_ElCtrl_wCWTF} a comparion of the mechanical power on the main line is shown using the same model.
This model has been widely used to estimate and optimize the power output of an AWE system, as well as to carry out economical considerations for AWE generators.
To carry out such a comparison, the lift coefficient and equivalent efficiency where estimated using a fraction of the data set. These values can change even for the same wing if different bridling setups are used. In \figref{fig:Exp_Fmain_PwrCycle_GAreg_wCWTF} the values are $C_L=0.8$ and $E_{eq}=3.7$ whereas in \figref{fig:Exp_Fmain_PwrCycle_ElCtrl_wCWTF} they are $C_L=0.8$ and $E_{eq}=3.2$.
Additionally, we do not consider a wind shear effect since it is difficult to estimate the wind shear with only a ground based anemometer available on our prototype. Therefore we assume that the wind speed measured roughly \Meter{5} above the ground corresponds to the wind speed at the wing's location. This generally leads to an underestimate of the traction force.
The two plots in \figsref{fig:Exp_Fmain_PwrCycle_GAreg_wCWTF}-\ref{fig:Exp_Fmain_PwrCycle_ElCtrl_wCWTF} show a good correspondence during the traction phase with the tendency of slightly underestimating the traction force. During the retraction phase the assumptions made in \cite{FaMP11} do not hold anymore and the model tends to a larger deviation, compare \figref{fig:Exp_Fmain_PwrCycle_GAreg_wCWTF}. In \figref{fig:Exp_Fmain_PwrCycle_ElCtrl_wCWTF}, a drop in wind speed and a lower reel-in speed led to a good matching of the traction force during the retraction phase.
The spike of the force in \figref{fig:Exp_Fmain_PwrCycle_ElCtrl_wCWTF} given by the model at roughly \seconds{475} can not exactly be explained by the data but can be caused by a wind gust at the wing's location, which is not seen by the ground based anemometer, leading to a reel-in during the traction phase to keep a minimum tension on the lines, compare \figsref{fig:Exp_R_PwrCycle_ElCtrl} and \ref{fig:Exp_Wspd_PwrCycle_ElCtrl}. This reel-in speed increases $\vec{W}_a^r$ in the model while still assuming the same wind speed and thus leading to a too high force estimate.
The time course of the mechanical power on the main line is compared to the simplified model in \figref{fig:Exp_Pmain_PwrCycle_GAreg_wCWTF} and \figref{fig:Exp_Pmain_PwrCycle_ElCtrl_wCWTF}. It can be noted that the average power values are quite consistent during the traction phase, and that the simplified model is subject to lower variability during the traction phase since it does not consider the changing wing speeds during the figure eight pattern.

\begin{figure}[t]
 \begin{subfigure}[b]{0.49\columnwidth}
  \begin{center}
   \includegraphics[trim= 0cm 0cm 0cm 0cm,width=\columnwidth]{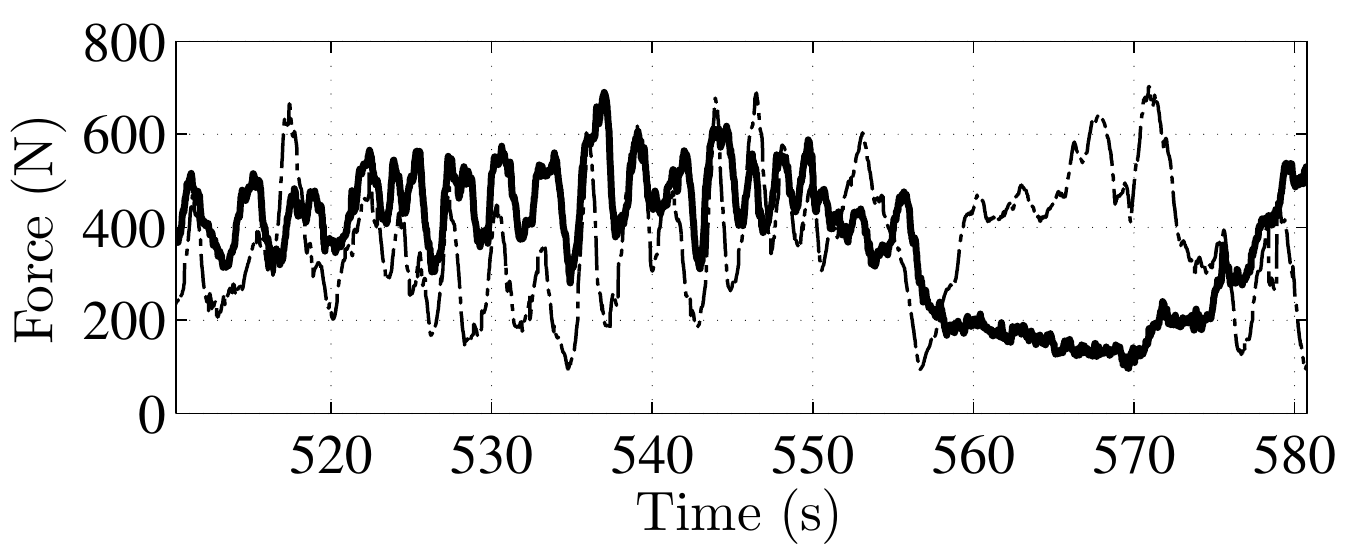}
   \caption[]{Time courses of the traction force on the main line (solid), corresponding to \figref{fig:Exp_PwrCycle_GAreg}, and the traction force model \eqref{eqn:CWTFmodel} (dot-dashed) .}
   \label{fig:Exp_Fmain_PwrCycle_GAreg_wCWTF}
  \end{center}
 \end{subfigure}
 \hfill
 \begin{subfigure}[b]{0.49\columnwidth}
  \begin{center}
   \includegraphics[trim= 0cm 0cm 0cm 0cm,width=\columnwidth]{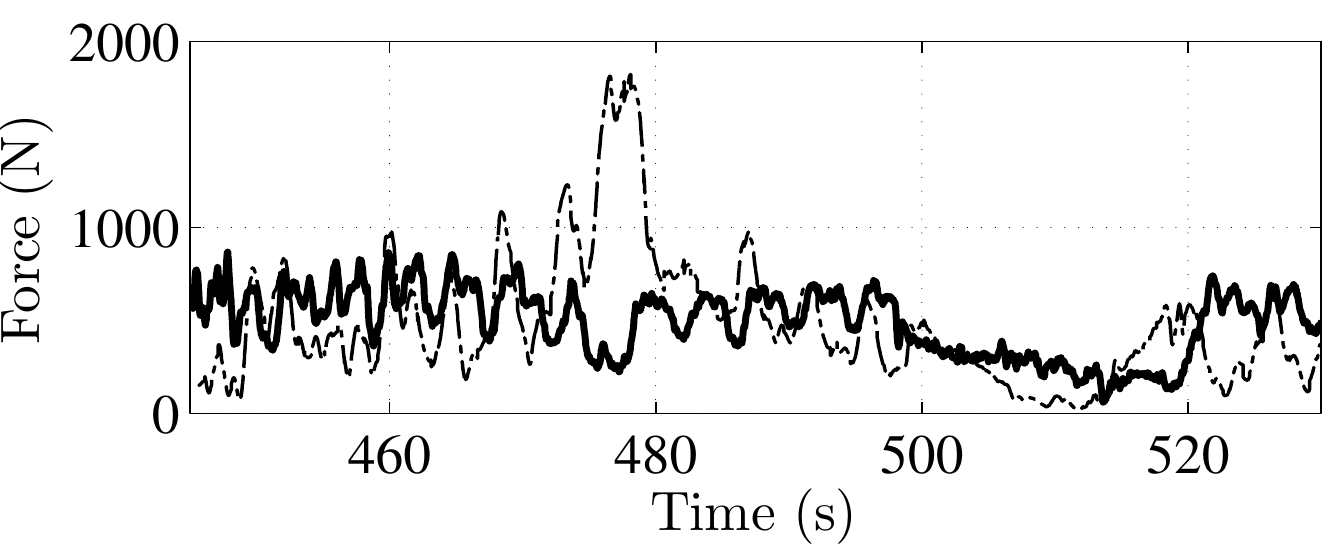}
   \caption[]{Time courses of the traction force on the main line (solid), corresponding to \figref{fig:Exp_PwrCycle_ElCtrl}, and the traction force model \eqref{eqn:CWTFmodel} (dot-dashed).}
   \label{fig:Exp_Fmain_PwrCycle_ElCtrl_wCWTF}
  \end{center}
 \end{subfigure}
 \\
 \vspace{0.5cm}
 \begin{subfigure}[b]{0.49\columnwidth}
  \begin{center}
   \includegraphics[trim= 0cm 0cm 0cm 0cm,width=\columnwidth]{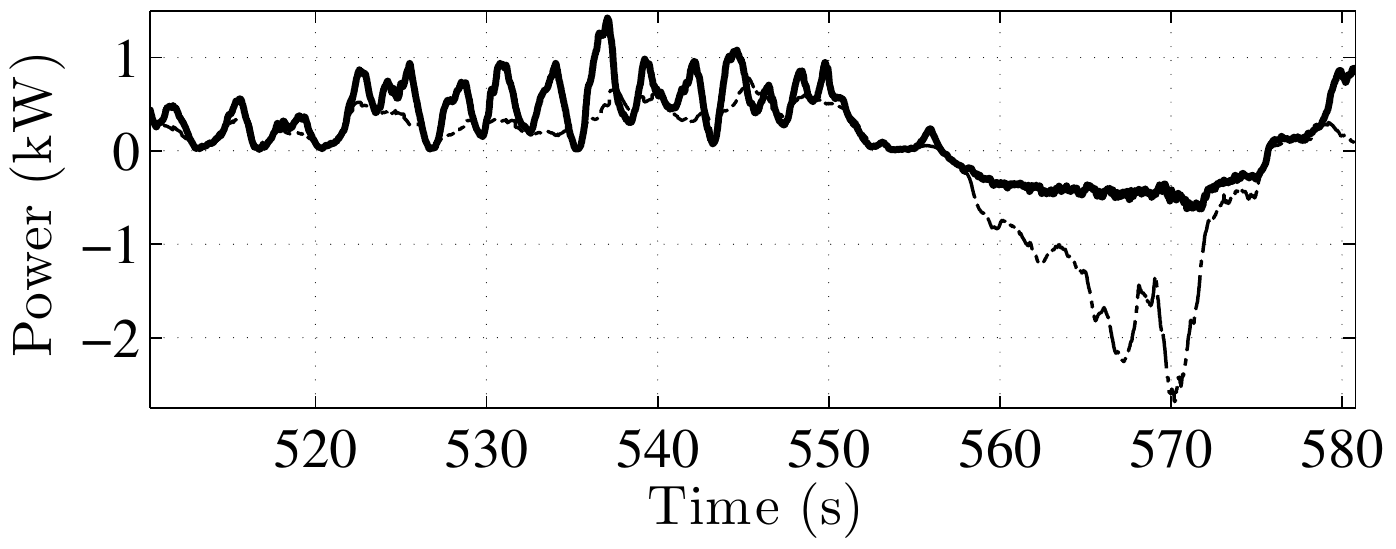}
   \caption[]{Time courses of the mechanical power on the main line (solid), corresponding to \figref{fig:Exp_PwrCycle_GAreg}, and the traction force model \eqref{eqn:CWTFmodel} (dot-dashed) .}
   \label{fig:Exp_Pmain_PwrCycle_GAreg_wCWTF}
  \end{center}
 \end{subfigure}
 \hfill
 \begin{subfigure}[b]{0.49\columnwidth}
  \begin{center}
   \includegraphics[trim= 0cm 0cm 0cm 0cm,width=\columnwidth]{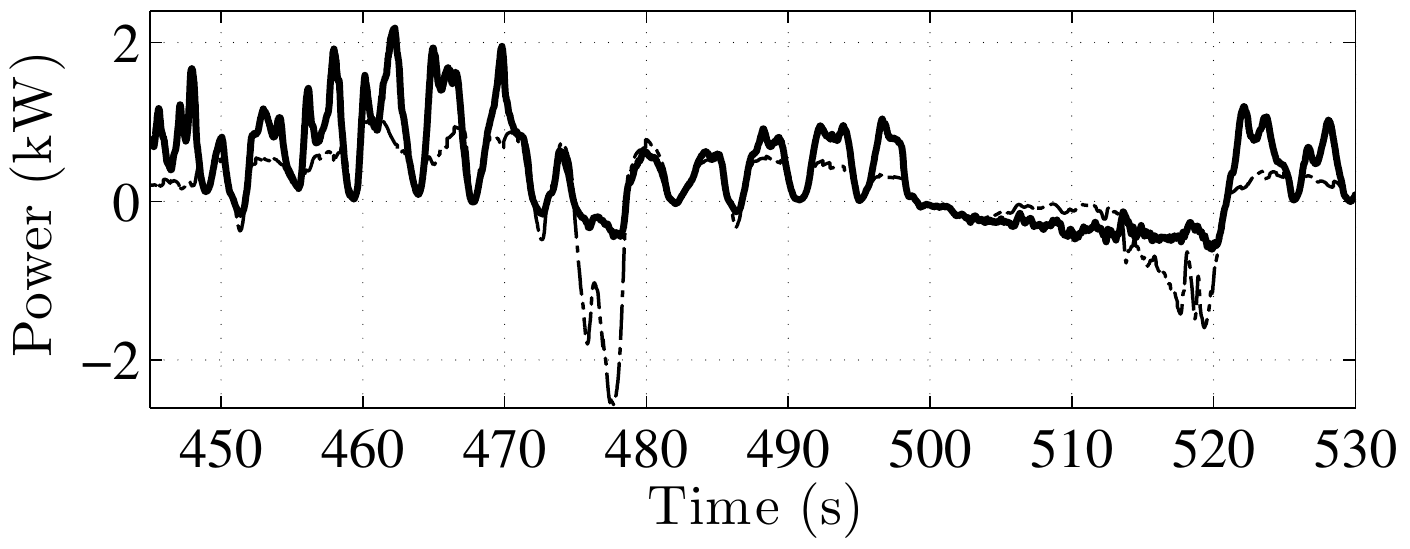}
   \caption[]{Time courses of the mechanical power on the main line (solid), corresponding to \figref{fig:Exp_PwrCycle_ElCtrl}, and the traction force model \eqref{eqn:CWTFmodel} (dot-dashed).}
   \label{fig:Exp_Pmain_PwrCycle_ElCtrl_wCWTF}
  \end{center}
 \end{subfigure}
 \caption{Comparison of experimental data, corresponding to \figsref{fig:Exp_PwrCycle_GAreg} and \ref{fig:Exp_PwrCycle_ElCtrl}, with the simplified traction force model given in \eqref{eqn:CWTFmodel}}
 \label{fig:Exp_PwrCycle_wCWTF}
\end{figure}


\section{Conclusion}
\label{sec:conclusion}
We proposed two different approaches to design a feedback controller for the retraction phase of an AWE system with ground-based generation, where the tether is recoiled onto the drums. Together with a previously proposed traction controller, and with a torque-based reeling control strategy, the approaches presented here have been used to achieve fully autonomous power cycles.

The two approaches were compared in simulation employing a nonlinear point-mass model and in real experiments using the Swiss Kite Power prototype. Both approaches were able to stabilize the wing in a position at the border of the wind window and can be used to fly complete power cycles with a tethered wing. The approach based on the elevation dynamics is more promising since it relies only on directly measured variables.

For both approaches, only few parameters, that can be intuitively tuned, are involved in the design. The approaches employ the steering deviation as control input and can stabilize the wing's elevation robustly against different tether lengths and reeling speeds. Hence, the latter can still be optimized to maximize the energy output of the system.

The presented automatic controllers for the retraction phase are two among the few which have so far been proven to work on real prototypes. 
Future research on this topic can be devoted to the power cycle optimization by using multivariable approaches and to the inclusion of active pitch strategies.


\appendix

\section*{Proof of Proposition \ref{prop:Elevation_Dynamics}}
\label{App:ProofOfProp}
  \label{proof:TH_dyn_approx}
  For the sake of simplicity of notation we drop the dependence of time-varying variables on $t$.
 
  The components of the force $\vec{F}$ in \eqref{eqn:EqnMot} in $\theta$ direction are given by the gravitational force $\vec{F}_g$ and the aerodynamic force $\vec{F}_a$.
  The gravitational force can be expressed in the local frame L as:
  \begin{IEEEeqnarray}{rCl}\label{eqn:APP_L_Fg}
   {}_L\vec{F}_g &=& \begin{pmatrix}
                      -mg\cos{(\theta)}\\
                       0\\
                       mg\sin{(\theta)}
                     \end{pmatrix}
  \end{IEEEeqnarray}
  with $m$ being the mass of the wing plus the added mass of the tether and $g$ is the gravitational acceleration.
  The aerodynamic force is given as:
  \begin{IEEEeqnarray}{rCl}
   \vec{F}_a &=& F_L\,\vec{e}_L+F_{D,eq}\,\vec{e}_W\, ,\label{eqn:APP_F_aero_general}
  \end{IEEEeqnarray}
  where $F_L$ is the lift force and $F_{D,eq}$ the equivalent drag force including also the tether drag:
  \begin{IEEEeqnarray}{rCl}
  F_L      &=& \frac{1}{2}\rho A C_L |\vec{W}_a|^2\label{eqn:F_L_areo}\\
  F_{D,eq} &=& \frac{1}{2}\rho A C_{D,eq} |\vec{W}_a|^2\, .\label{eqn:F_D_areo}
  \end{IEEEeqnarray}
  In \eqref{eqn:F_L_areo} and \eqref{eqn:F_D_areo}, $\rho$ is the air density, $A$ is the effective area of the wing, $C_L$ and $C_{D,eq}$ are the lift coefficient and equivalent drag coefficient, and $\vec{W}_a$ is the apparent wind velocity.
  The vectors $\vec{e}_L$ and $\vec{e}_W$ in \eqref{eqn:APP_F_aero_general} can be expressed in the $L$ frame as:
   \begin{IEEEeqnarray}{rCl}
    {}_L\vec{e}_L(t) &=& \begin{pmatrix}
                          \cos{\xi} & -\sin{\xi} & 0\\
                          \sin{\xi} &  \cos{\xi} & 0\\
                          0         &  0         & 1
                         \end{pmatrix}\cdot\nonumber\\
                     & & \begin{pmatrix}
                           \cos{\psi}\cos{\eta}\sin{\Delta\alpha}\\
                           \cos{\psi}\sin{\eta}\sin{\Delta\alpha}+\sin{\psi}\cos{\Delta\alpha}\\
                          -\cos{\psi}\cos{\eta}\cos{\Delta\alpha}
                         \end{pmatrix}\label{eqn:APP_L_e_L}\\
    {}_L\vec{e}_W(t) &=& \begin{pmatrix}
                          \cos{\xi} & -\sin{\xi} & 0\\
                          \sin{\xi} &  \cos{\xi} & 0\\
                          0         &  0         & 1
                         \end{pmatrix}
                         \begin{pmatrix}
                          -\cos{\Delta\alpha}\\
                           0\\
                          -\sin{\Delta\alpha}
                         \end{pmatrix}\,,\label{eqn:APP_L_e_W}
   \end{IEEEeqnarray}
  where $\Delta\alpha$ is the angle between the apparent wind and the tangent plane $(\vec{e}_N,\vec{e}_E)$, $\psi$ the roll angle of the wing which is a function of the steering input $\delta$:
  \begin{IEEEeqnarray}{rCl}\label{eqn:APP_psi_definition}
   \psi &=& \arcsin{\left(\frac{\delta}{d_s}\right)}\, ,
  \end{IEEEeqnarray}
  $\eta$ is given by (see e.g. \cite{Argatov2009}):
  \begin{IEEEeqnarray}{rCl}\label{eqn:APP_eta_eq}
   \eta &=& \arcsin{(\tan{(\Delta\alpha)}\tan{(\psi)})}\, ,
  \end{IEEEeqnarray}
  and $\xi$ is the heading of the wing which is given by the apparent wind vector $\vec{W}_a$, defined in \eqref{eqn:WapparentDefinition}, and can be written as:
  \begin{IEEEeqnarray}{rCl}\label{eqn:xi_def}
  \xi &=& \arctan{\left(\frac{-\vec{W}_a\cdot\mathbf{e}_{E}}{-\vec{W}_a\cdot\mathbf{e}_{N}}\right)}\label{eqn:APP_Wing_Heading_Def}\\
      &=&  \arctan{\left(\frac{W_0\sin{(\phi)}+r\cos{(\theta)}\dot{\phi}}{W_0\sin{(\theta)}\cos{(\phi)}+r\dot{\theta}}\right)}\, .
  \end{IEEEeqnarray}
  The assumption underlying equation \eqref{eqn:APP_Wing_Heading_Def} is that the wing's longitudinal symmetry axis is always contained in the plane spanned by the vectors $\vec{W}_a$ and $\vec{p}$ and is common in the field of AWE \cite{HouDi07,IlHD07,WillLO08}.
  
  Thus the force $\mathbf{F}$ in $\mathbf{e}_N$ direction can be computed as:
    \begin{IEEEeqnarray}{rCl}\label{eqn:F_e_N}
  \mathbf{F}\cdot\mathbf{e}_N &=& F_L\left(\cos{(\eta)}\sin{(\Delta\alpha)}\cos{(\xi)}-\right.\nonumber\\
			      & & \phantom{F_L}\left.\left(\sin{(\eta)}\sin{(\Delta\alpha)}+\cos{(\Delta\alpha)}\psi\right)\sin{(\xi)}\right)-\\
			      &_& F_{D,eq}\cos{(\Delta\alpha)}\cos{(\xi)} - mg\cos{(\theta)}\, .\nonumber
  \end{IEEEeqnarray}
  For more details and a formal definition of the components of $\vec{F}$ see e.g. \cite{FaZg13}.

  By \asref{as:SteadyState} and considering the equilibrium of the lift and drag force in the direction of the wing's heading $\xi$, projected on the tangent plane to the wind window at the wing's location, we have (see \cite{FaMP11}):
  \begin{IEEEeqnarray}{rCl}
  \label{eqn:APP_LoverDEq}
  \frac{\sin{(\Delta\alpha)}}{\cos{(\Delta\alpha)}} &=& \frac{C_{D,eq}}{C_L} \doteq \frac{1}{E_{eq}}\, ,
  \end{IEEEeqnarray}
  where $E_{eq}$ is the equivalent efficiency of the wing. By \eqref{eqn:APP_LoverDEq} we can see that $\Delta\alpha$ is small for a reasonable wing efficiency of $4-6$.

  By \asref{as:SmallInput}, \eqref{eqn:APP_psi_definition}, and \eqref{eqn:APP_LoverDEq} we see that \eqref{eqn:APP_eta_eq} simplifies to
  \begin{IEEEeqnarray}{rCl}\label{eqn:eta_linear}
  \eta = \frac{1}{E_{eq}}\psi =\frac{1}{E_{eq}d_s}\delta \, ,
  \end{IEEEeqnarray}
  where $d_s$ is the span of the wing.
  
  By using \eqref{eqn:F_e_N} together with \eqref{eqn:F_L_areo}-\eqref{eqn:xi_def} we obtain:
  \begin{IEEEeqnarray}{rCl}
  \mathbf{F}\cdot\mathbf{e}_N &=& \frac{\rho AC_L}{2d_s}\left(1+\frac{1}{E_{eq}^2}\right)(W_0\sin{(\phi)}+r\cos{(\theta)}\dot{\phi})|\vec{W}_a|\delta-\nonumber\\
			      & & mg\cos{(\theta)}\, .
  \end{IEEEeqnarray}
  Equation \eqref{eqn:THdd_dyn}, by \asref{as:SteadyState}, can now be rewritten as
  
  \begin{IEEEeqnarray}{rCl}\label{eqn:TH_dyn_approx_appendix}
  \ddot{\theta}               &=& -\mathcal{C}\delta-\frac{g\cos{(\theta)}+2\dot{\theta}\dot{r}}{r}\, ,
  \end{IEEEeqnarray}
  
  where
  \begin{IEEEeqnarray}{rCl}
  \mathcal{C}                 &=& \frac{\rho AC_L}{2rmd_s}\left(1+\frac{1}{E_{eq}^2}\right)W_0\sin{(\phi)}|\vec{W}_a|\, .
  \end{IEEEeqnarray}
  \hfill$\blacksquare$

\bibliographystyle{IEEEtran}

\end{document}